\theoremstyle{plain}
\newtheorem{thm}{\protect\theoremname}
\theoremstyle{plain}
\theoremstyle{definition}
\theoremstyle{plain}
\newtheorem{cor}[]{\protect\corollaryname}
\theoremstyle{plain}
\theoremstyle{plain}
\theoremstyle{plain}
\newtheorem{lem}[]{\protect\lemmaname}
\theoremstyle{definition}
\newtheorem{rem}[]{\protect\remarkname}
\providecommand{\algorithmname}{Algorithm}
\providecommand{\assumptionname}{Assumption}
\providecommand{\corollaryname}{Corollary}
\providecommand{\definitionname}{Definition}
\providecommand{\propositionname}{Proposition}
\providecommand{\theoremname}{Theorem}
\providecommand{\remarkname}{Remark}
\providecommand{\lemmaname}{Lemma}
\begin{document}

\title{Optimal Transport: Fast Probabilistic Approximation with Exact Solvers}

\author{Max Sommerfeld\thanks{Supported by the DFG Research Training Group 2088 ``Discovering Structure in Complex Data: Statistics Meets Optimization and Inverse Problems''.}\ \thanks{Felix-Bernstein Institute for Mathematical Statistics in the Biosciences,
University G\"ottingen, Goldschmidtstr. 7, 37077 G\"ottingen}\, , \ 
 J\"orn Schrieber\footnotemark[1] \thanks{Institute for Mathematical Stochastics, University G\"ottingen, Goldschmidtstr. 7, 37077 G\"ottingen}\, ,
 Yoav Zemel\footnotemark[2] \thanks{Supported by Swiss National Science Foundation Grant \#178220}\, ,
 Axel Munk\footnotemark[3] \thanks{Max-Planck-Institute for Biophysical Chemistry,
Am Fa\ss{}berg 11, 37077 G\"ottingen\newline
{Email:} 
{max.sommerfeld@mathematik.uni-goettingen.de}, 
{joern.schrieber-1@mathematik.uni-goettingen.de},
{yoav.zemel@mathematik.uni-goettingen.de},
{munk@math.uni-goettingen.de}\newline
We thank an Associate Editor and three revieweres for insightful comments on a previous version of the paper.}
}

\maketitle

\vspace{-0.02cm}

\begin{abstract}
We propose a simple subsampling scheme for fast randomized approximate computation of optimal transport distances on finite spaces. This scheme operates on a random subset of the full data and can use any exact algorithm as a black-box back-end, including state-of-the-art solvers and entropically penalized versions. It is based on averaging the exact distances between empirical measures generated from independent samples from the original measures and can easily be tuned towards higher accuracy or shorter computation times. 
To this end, we give non-asymptotic deviation bounds for its accuracy in the case of discrete optimal transport problems. In particular, we show that in many important instances, including images (2D-histograms), the approximation error is independent of the size of the full problem.
We present numerical experiments that demonstrate that a very good approximation in typical applications can be obtained in a computation time that is several orders of magnitude smaller than what is required for exact computation of the full problem.
\end{abstract}

\global\long\def\CC{\mathbb{C}}
\global\long\def\SS{S^{1}}
\global\long\def\RR{\mathbb{R}}
\global\long\def\actson{\curvearrowright}
\global\long\def\ra{\rightarrow}
\global\long\def\z{\mathbf{z}}
\global\long\def\ZZ{\mathbb{Z}}
\global\long\def\NN{\mathbb{N}}
\global\long\def\sgn{\mathrm{sgn}\:}
\global\long\def\RRpos{\RR_{>0}}
\global\long\def\var{\mathrm{var}}
\global\long\def\circint{\int_{-\pi}^{\pi}}
\global\long\def\F{\mathcal{F}}
\global\long\def\pb#1{\langle#1\rangle}
\global\long\def\op{\mathrm{op}}
\global\long\def\Op{\mathrm{op}}
\global\long\def\supp{\mathrm{supp}}
\global\long\def\ceil#1{\lceil#1\rceil}
\global\long\def\TV{\mathrm{TV}}
\global\long\def\floor#1{\lfloor#1\rfloor}
\global\long\def\vt{\vartheta}
\global\long\def\vp{\varphi}
\global\long\def\class#1{[#1]}
\global\long\def\of{(\cdot)}
\global\long\def\one{\mathbbm{1}}
\global\long\def\cov{\mathrm{cov}}
\global\long\def\CC{\mathbb{C}}
\global\long\def\SS{S^{1}}
\global\long\def\RR{\mathbb{R}}
\global\long\def\actson{\curvearrowright}
\global\long\def\ra{\rightarrow}
\global\long\def\z{\mathbf{z}}
\global\long\def\ZZ{\mathbb{Z}}
\global\long\def\h{\mu}
\global\long\def\convr{*_{\RR}}
\global\long\def\x{\mathbf{x}}
\global\long\def\ve{\epsilon}
\global\long\def\cv{\mathfrak{c}}
\global\long\def\wh#1{\hat{#1}}
\global\long\def\norm#1{\left|\left|#1\right|\right|}
\global\long\def\Tmean{T}
\global\long\def\Tslope{m}
\global\long\def\degC#1{#1^{\circ}\mathrm{C}}
\global\long\def\X{\mathbf{X}}
\global\long\def\bbeta{\boldsymbol{\beta}}
\global\long\def\b{\mathbf{b}}
\global\long\def\Y{\mathbf{Y}}
\global\long\def\H{\mathbf{H}}
\global\long\def\e{\bm{\epsilon}}
\global\long\def\s{\mathbf{s}}
\global\long\def\t{\mathbf{t}}
\global\long\def\R{\mathbf{R}}
\global\long\def\dAc{\partial A_c}
\global\long\def\cl{\mathrm{cl}}
\global\long\def\Hoe{\text{H\"older}}
\global\long\def\bb#1{\mathbb{#1}}
\global\long\def\bX{\bm{X}}
\global\long\def\bY{\bm{Y}}
\global\long\def\D{\mathfrak{D}}
\global\long\def\dGH{d_{\mathcal{GH}}}
\global\long\def\dWp{d_{\mathcal{W},p}}
\global\long\def\umu{\underline{\mu}}
\global\long\def\FLB{\mathbf{FLB}}
\global\long\def\dX{\mathbf{d_\bX}}
\global\long\def\dY{\mathbf{d_\bY}}
\global\long\def\mX{{\mu_\bX}}
\global\long\def\mY{{\mu_\bY}}	
\global\long\def\cal#1{\mathbf{\mathcal{#1}}}
\global\long\def\UXn{U_{\bX n}}
\global\long\def\dXX{{\Delta_{\bX}}}
\global\long\def\dYY{{\Delta_{\bY}}}
\global\long\def\d{\:\mathrm{d}}
\global\long\def\Ds{\mathfrak{d}}
\global\long\def\ph{\hat{p}}
\global\long\def\muh{\hat{\mu}}
\global\long\def\onevec{\one}
\global\long\def\T{\mathcal{T}}
\global\long\def\root{\mathrm{root}}
\global\long\def\parent{\mathrm{par}}
\global\long\def\children{\mathrm{children}}
\global\long\def\scp#1#2{\langle #1, #2 \rangle}
\global\long\def\X{\mathcal{X}}
\global\long\def\br{\bm{r}}
\global\long\def\bs{\bm{s}}
\global\long\def\brh{\hat{\bm{r}}}
\global\long\def\bsh{\hat{\bm{s}}}
\global\long\def\rh{\hat{r}}
\global\long\def\sh{\hat{s}}
\global\long\def\bu{{\bm{u}}}
\global\long\def\bv{\bm{v}}
\global\long\def\bh{\bm{h}}
\global\long\def\bw{\bm{w}}
\global\long\def\bD{{\bm{D}}}
\global\long\def\bx{\bm{x}}
\global\long\def\by{\bm{y}}
\global\long\def\bZ{\bm{Z}}
\newcommand{\WhnmB}{\hat{W}^{(S)}}
\global\long\def\smax{{\sigma_{\mathrm{max}}}}
\global\long\def\sT{\sigma_\T}
\global\long\def\bsT{\bm{\sigma}_\T}
\global\long\def\E{\mathcal{E}}
\global\long\def\V{\mathcal{V}}
\global\long\def\cmax{c_{\mathrm{max}}}
\global\long\def\h{h}
\global\long\def\Id{\mathrm{Id}}
\global\long\def\N{\mathcal{N}}
\global\long\def\diam{\:\mathrm{diam}}
\global\long\def\ns{{n^{*}}}
\global\long\def\ps{{p^{*}}}
\global\long\def\Wt{W^{(t)}}
\global\long\def\BL1{\mathrm{BL}_1}
\global\long\def\I{\mathcal{I}}
\global\long\def\simD{\stackrel{D}{\sim}}
\global\long\def\d{\mathrm{d}}
\global\long\def\dpp{d^p\!}

\section{Introduction}
\label{sec:intro}

 Optimal transport distances, a.k.a.\ Wasserstein, earth-mover's,
 Monge-Kantorovich-Rubinstein or Mallows distances,   
 as  metrics to compare probability measures \citep{rachev_rueschendorf_1998, villani_optimal_2008}
 have become a popular tool in a wide range of applications in computer science,
 machine learning and statistics. 
 Important examples are image retrieval  \citep{rubner_earth_2000} and classification
 \citep{zhang_local_2007},  computer vision \citep{ni_local_2009},
 but also therapeutic equivalence \citep{Munk1998}, generative modeling \citep{bousquet_2017},
 biometrics \citep{sommerfeld_inference_2018}, metagenomics
 \citep{evans_phylogenetic_2012} and medical imaging \citep{Ruttenberg2013}.

Optimal transport distances compare probability measures by incorporating a suitable
ground distance on the underlying space, typically driven by the particular application, e.g. euclidean distance. This often makes it preferable to
competing distances such as total-variation or $\chi^2$-distances, which are
oblivious to any metric or similarity structure on the ground space.  Note that total variation is the Wasserstein distance with respect to the trivial metric, which usually does not carry the geometry of the underlying ground space.  In this setting, optimal transport distances have a clear and intuitive interpretation
as the amount of `work' required to transport one probability distribution onto
the other. 
This notion is typically well-aligned with human perception of
similarity \citep{rubner_earth_2000}.

\begin{figure}
  \centering
  \includegraphics[width=0.45\textwidth]{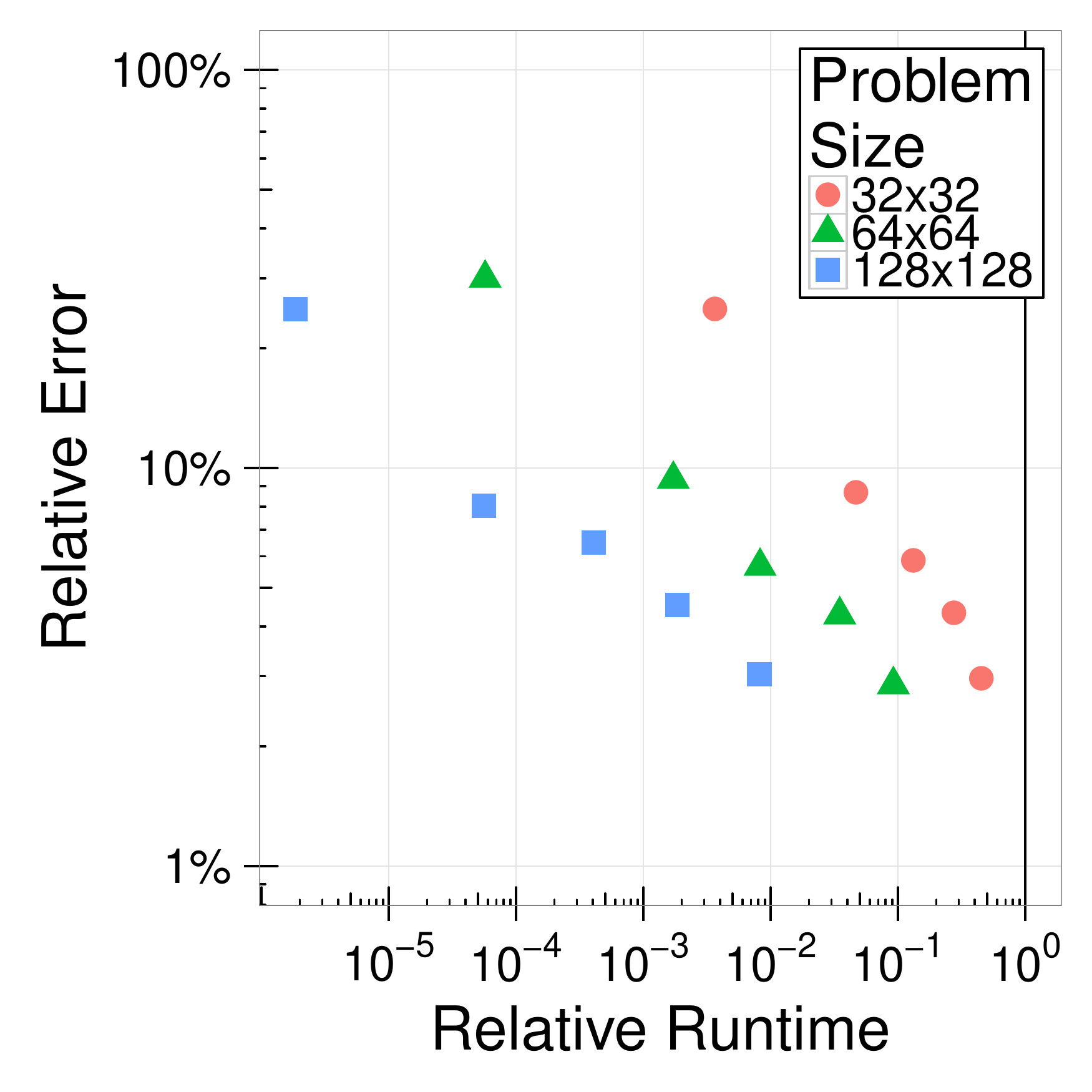}
  \caption{Relative error and relative runtime compared to the exact computation
  of the proposed scheme. Optimal transport distances and its approximations
were computed between images of different sizes ($32\times32$, $64\times 64$,
$128\times 128$). Each point represents a specific parameter choice in the
scheme and is a mean over different problem instances, solvers and cost
exponents. For the relative runtimes the geometric mean is reported. For details
on the parameters see Figure \ref{fig:overview}. }
  \label{fig:meanrelrel}
\end{figure}

\subsection{Computation}
The outstanding theoretical and practical performance of optimal transport
distances is contrasted by its excessive computational cost. 
For example, optimal transport distances can be computed with an auction
algorithm \citep{bertsekas_auction_1992}. 
For  two probability
measures supported on $N$ points this algorithm has a worst case run time of
$\mathcal{O}(N^3\log N)$. Other methods like the transportation simplex have
sub-cubic empirical average runtime (compare \cite{Gottschlich2014}), but 
exponential worst case runtimes.

Therefore, many attempts have been made to design improved algorithms. We give some selective references: \cite{ling_efficient_2007} proposed a specialized algorithm for $L_1$-ground distance and $\X$ a regular grid and report an empirical runtime of $\mathcal{O}(N^2)$.\cite{Gottschlich2014} improved existing general purpose algorithms by initializing with a greedy heuristic. Their \textit{Shortlist} algorithm achieves an empirical average runtime of the order
$\mathcal{O}(N^{5/2})$. \cite{schmitzer_sparse_2016} solves the optimal transport problem by solving a sequence of sparse problems.  The theoretical runtime of his algorithm is not known, but it exhibits excellent performance on two-dimensional grids \citep{schrieber_dotmark_2016}. The literature on this topic is rapidly growing and we refer for further recent work to \citet{liu2018multilevel}, \citet{dvurechensky2018computational}, \citet{lin2019efficient}, and the references given there.

Despite these efforts, still many practically relevant problems remain well outside the scope of available algorithms. See \cite{schrieber_dotmark_2016} for an overview and a numerical comparison of state-of-the-art algorithms for discrete optimal transport. This is true in particular for two or three dimensional images and spatio temporal imaging, which constitute an important area of potential applications. Here,  $N$ is  the number of pixels or voxels and is  typically of size $10^5$ to $10^7$. Naturally, this problem is aggravated when many distances have to be computed as is the case for Wasserstein barycenters \citep{agueh_barycenters_2011, cuturi_fast_2014}, which have become an important use case.

To bypass the computational bottleneck, also many surrogates for optimal transport
distances that are more amenable to fast computation have been proposed. 
\cite{shirdhonkar_approximate_2008} proposed to use an equivalent distance based on wavelets
that can be computed in linear time but cannot be calibrated to approximate the
Wasserstein distance with arbitrary accuracy.
\cite{pele_fast_2009} threshold the ground distance to reduce the complexity of
the underlying linear program, obtaining a lower bound for the exact distance.
\cite{Cuturi2013a} altered the optimization problem by adding an entropic
penalty term in order to use faster and more stable algorithms, see also \cite{altschuler_greenkhorn_2017}.
\cite{bonneel_sliced_2015} consider the 1-D Wasserstein distances of radial projections
of the original measures, exploiting the fact that, in one dimension, computing the
Wasserstein distance  amounts to sorting the point masses and hence has
quasi-linear computation time.

\subsection{Contribution}
We do \textit{not} propose a new algorithm to solve the optimal transport
problem. Instead, we propose a simple probabilistic scheme as a meta-algorithm that can use any
algorithm (e.g., those mentioned above) solving finitely supported optimal transport problems  as a black-box back-end and gives a random but fast approximation of the exact distance. This scheme
\begin{enumerate}[a)]
  \item is extremely easy to implement, to parallelize and to tune towards higher accuracy or
    shorter computation time as desired;
  \item can be used with any algorithm for transportation problems as a
    back-end, including general LP solvers, specialized network solvers and
    algorithms using entropic penalization \citep{Cuturi2013a};
  \item comes with theoretical non-asymptotic guarantees for the approximation
    error of the Wasserstein distance  --- in particular, this error is independent of the size of the original
    problem in many important cases, including images;
  \item works well in practice. For example, the Wasserstein distance between
    two $128^2$-pixel images can typically be approximated with a relative
    error of less than $5\%$ in only $1\%$ of the time required
    for exact computation.
\end{enumerate}

\section{Problem and Algorithm}
\label{sec:theory}
Although our meta-algorithm is applicable to exact solvers for any optimal
transport distance between probability measures, for example the Sinkhorn distance \citep{Cuturi2013a},
the theory we present here concerns the \cite{kantorovich_translocation_1942} transport distance, often also denoted as \textit{Wasserstein distance}.

\paragraph{Wasserstein Distance}
Consider a fixed finite space $\X = \left\{ x_1,\dots,x_N \right\}$ with a
metric $d:\X\times \X \ra [0, \infty)$. Every
probability measure on $\X$ is given by a vector $\br$ in  
\[
  \mathcal{P}_\X = \left\{ \br = (r_x)_{x\in\X} \in\RR_{\geq 0}^\X : 
  \sum_{x\in\X} r_x =1 \right\},
\]
via $P_{\br}(\{x\}) = r_x$. We will not distinguish between the vector
$\br$ and the measure it defines.
For $p\geq 1$, the \textit{$p$-th Wasserstein distance} between two probability measures
$\br,\bs \in\mathcal{P}_\X$  is defined as
\begin{equation} \label{eq:def_wass}
  W_p(\br, \bs) = 
  \left(  
    \min_{\bw\in\Pi(\br, \bs)}  \sum_{x,x'\in\X} \dpp(x, x') w_{x,x'}
\right)^{1/p},
\end{equation}
where $\Pi(\br, \bs)$ is the set of all probability measures  on $\X\times\X$
with marginal distributions $\br$ and $\bs$, respectively. 
The minimization in \eqref{eq:def_wass} can be written as a linear program
\begin{equation}
  \label{eq:lp}
   \min \sum_{x,x'\in\X} w_{x,x'} \dpp(x,x') 
  \quad \textbf{s.t.} \quad 
  \sum_{x'\in\X} w_{x,x'} = r_x, \quad \sum_{x\in\X} w_{x,x'} = s_{x'},
  \quad w_{x,x'} \geq 0, 
\end{equation}
with $N^2$ variables $w_{x,x'}$ and $2 N$ constraints, where
the weights $\dpp(x, x')$ are known and have been precalculated.

\subsection{Approximating the Wasserstein Distance}
The idea of the proposed algorithm is to replace a probability measure
$\br\in\mathcal{P}(\X)$ with an empirical measure $\brh_S$
based on i.i.d.\ picks $X_1, \dots,
X_S\sim\br$ for some integer $S$:
\begin{equation}
  \rh_{S,x} = \frac{1}{S}\#\left\{ k : X_k=x \right\}, \quad x \in \X.
  \label{eq:rhat}
\end{equation}
Likewise, replace $\bs$ with $\bsh_S$. Then, use the \emph{empirical optimal transport distance} (EOT) $W_p(\brh_S, \bsh_S)$ as a random approximation of $W_p(\br, \bs)$.

\begin{algorithm}[H]
  \caption{Statistical approximation of $W_p(\br, \bs)$}
  \label{Subsamplealgo}
  \begin{algorithmic}[1]
    \STATE {\bfseries Input:} Probability measures $\br, \bs \in\mathcal{P}_\X$, sample
  size $S$ and number of repetitions $B$
  \FOR{$i = 1 \dots B$}
    \STATE Sample i.i.d. $X_1 ,\dots, X_S\sim\br$ and independently $Y_1, \dots, Y_S\sim
    \bs$
    \STATE $\rh_{S,x} \gets \#\left\{ k : X_k=x \right\}/S$ \textbf{for all}
    $x\in\X$
    \STATE $\sh_{S,x} \gets \#\left\{ k : Y_k=x \right\}/S$ \textbf{for all}
    $x\in\X$
    \STATE Compute $\hat{W}^{(i)} \gets W_p(\brh_S, \bsh_S)$
    \ENDFOR
    \STATE {\bfseries Return:} $\WhnmB_p(\br, \bs) \gets B^{-1} \sum_{i = 1}^B \hat{W}^{(i)}$
  \end{algorithmic}
\end{algorithm}
In each of the $B$ iterations in Algorithm \ref{Subsamplealgo}, the Wasserstein
distance between two sets of $S$ point masses has to be computed. For the exact
Wasserstein distance, two measures on $N$ points need to be compared. If we take
for example the super-cubic runtime of the auction algorithm as a basis, Algorithm
\ref{Subsamplealgo} has worst case runtime
\[
  \mathcal{O}(BS^3 \log S)
\]
compared to $\mathcal{O}(N^3\log N)$ for the exact distance. This means a dramatic reduction
of computation time if $S$ (and $B$) are small compared to $N$.

The application of Algorithm \ref{Subsamplealgo} to other optimal transport
distances is straightforward. One can simply replace $W_p(\brh_S, \bsh_S)$ with
the desired distance, e.g., the Sinkhorn distance \citep{Cuturi2013a}, see also
our numerical experiments below. Further, the algorithm can be applied to non-discrete instances as long as we can sample from the measures. However, the theoretical results below only apply to the EOT on a finite ground space $\X$.

\section{Theoretical results}
\label{sec:theory_app}
We give general non-asymptotic guarantees for the quality of the
approximation $\WhnmB_p(\br, \bs)=B^{-1}\sum_{i=1}^B W_p(\brh_{S,i},\bsh_{S,i})$ (where $\brh_{S,i}$ are independent empirical measures of size $S$ from $\br$;  see Algorithm~\ref{Subsamplealgo}) in terms of the expected $L_1$-error. That is, we give bounds of the form
\begin{equation}
  E\left[ \left| \WhnmB_p(\br, \bs) - W_p(\br, \bs) \right| \right] \leq g(S,
  \X, p),
  \label{eq:expected_L1_error}
\end{equation}
for some function $g$. We are particularly interested in the dependence of the
bound on the size $N$ of $\X$ and on the sample size $S$ as this
determines how the number of sampling points $S$ (and hence the computational effort
of Algorithm \ref{Subsamplealgo}) must be increased for increasing problem size
$N$ in order to retain (on average) a certain approximation quality.
In a second step, we obtain deviation inequalities for $\WhnmB(\br,
\bs)$ via concentration of measure techniques.

\paragraph{Related work} 
The question of the convergence of empirical measures to the true measure in
expected Wasserstein distance has been considered in detail by
\cite{Boissard2014} and \cite{fournier_rate_2014}. The case of the underlying
measures being different (that is, the convergence of $EW_p(\brh_S, \bsh_S)$ to
$W_p(\br, \bs)$ when $\br\neq\bs$) has not been considered to the best of our knowledge.
Theorem \ref{thm:mean_rate_null} is reminiscent of the main result
of \cite{Boissard2014}. However, we give a result here, which is explicitly
tailored to finite spaces and makes explicit the dependence of the constants on
the size $N$ of the underlying set $\X$.
In fact, when we consider finite spaces $\X$ which are subsets of $\RR^D$ later
in Theorem \ref{thm:grid}, we will see that in contrast to the results of
\cite{Boissard2014}, the rate of convergence (in $S$) does not change when the
dimension gets large, but rather the dependence of the constants on $N$ changes.
This is a valuable insight as our main concern here is how the subsample size
$S$ (driving the computational cost) must be chosen when $N$ grows in order to
retain a certain approximation quality.

\subsection{Expected absolute error}
Recall that, for $\delta>0$ the \textit{covering number} $\mathcal{N}(\X,\delta)$ of $\X$ is defined as the minimal number of closed balls with radius $\delta$ and centers in $\X$ that is needed to cover $\X$. Note that in contrast to continuous spaces, $\mathcal N(\X,\delta)$ is bounded by$N$ for all $\delta > 0$.

\begin{thm}
  \label{thm:mean_rate_null}
  Let $\brh_S$ be the empirical measure obtained from i.i.d. samples $X_1, \dots,
  X_S\sim \br$, then
  \begin{equation}
    E\left[ W_p^p(\brh_S, \br)\right] 
    \leq   \E_q / \sqrt{S}, 
    \label{eq:mean_rate_null}
  \end{equation}
  where the constant $\E_q := \E_q(\X, p)$ is given
  by
  \begin{equation}
    \begin{split}
    \E_q  = 
    2^{p-1}q^{2p} (\diam(\X))^p
    \left( q^{-(l_{\max} + 1)p} \sqrt{N} + \sum_{l=0}^{l_{\max}} q^{-lp}
    \sqrt{\mathcal{N}(\X, q^{-l}\diam(\X))} \right)
  \end{split}
     \label{eq:Eq}
  \end{equation}
  for any $2 \leq q\in \NN$ and $l_{\max}\in \NN$.
  \end{thm}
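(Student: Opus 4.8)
The plan is to separate the argument into a deterministic, purely geometric part and a probabilistic part, the latter applied cell by cell. First I would fix a multiscale (dyadic-type) family of partitions of $\X$ adapted to the covering numbers: for each level $l \in \{0,1,\dots,l_{\max}\}$ choose a cover of $\X$ by $\N(\X, q^{-l}\diam(\X))$ closed balls of radius $q^{-l}\diam(\X)$, and then turn these covers into a single \emph{nested} sequence of partitions $\mathcal Q_0 \supseteq \mathcal Q_1 \supseteq \dots \supseteq \mathcal Q_{l_{\max}}$ by successive refinement (assigning each point to the first ball containing it at every scale). By construction $\mathcal Q_l$ has at most $\N(\X,q^{-l}\diam(\X))$ cells, each of diameter at most $2q^{-l}\diam(\X)$. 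I would also append one conceptual final level $\mathcal Q_{l_{\max}+1}$ consisting of the $N$ singletons, whose cardinality is bounded by $N$ and whose effective scale is $q^{-(l_{\max}+1)}\diam(\X)$; this is where the $\sqrt N$ term will come from.

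The core deterministic step is a chaining bound that controls $W_p^p(\brh_S,\br)$ by the per-cell mass discrepancies. Writing $\pi_l$ for the map sending a point to the representative of its level-$l$ cell, the triangle inequality for $W_p$ together with $(a+b)^p \le 2^{p-1}(a^p+b^p)$ lets me telescope across scales and reduce everything to transporting, at each level, the excess mass between sibling cells. Since such mass travels only within a parent cell, each rebalancing step costs at most (cell diameter)$^p$ times $\sum_{A \in \mathcal Q_l}\lvert \brh_S(A)-\br(A)\rvert$. Collecting the levels yields a deterministic inequality of the shape
\begin{equation*}
  W_p^p(\brh_S,\br) \;\le\; 2^{p-1} q^{2p}(\diam(\X))^p \sum_{l=0}^{l_{\max}+1} q^{-lp} \sum_{A\in\mathcal Q_l} \bigl\lvert \brh_S(A)-\br(A)\bigr\rvert,
\end{equation*}
where the prefactor $2^{p-1}q^{2p}(\diam(\X))^p$ absorbs the triangle-inequality split, the passage from covering radius to cell diameter, and the reindexing between the scale at which mass is transported and the scale at which the discrepancy is measured.

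Finally I would take expectations and exploit independence. For a fixed cell $A$, the count $S\,\brh_S(A)$ is $\mathrm{Binomial}(S, \br(A))$, so by Jensen's inequality $E\lvert \brh_S(A)-\br(A)\rvert \le \sqrt{\var(\brh_S(A))} = \sqrt{\br(A)(1-\br(A))/S} \le \br(A)^{1/2}/\sqrt S$. Summing over the cells of $\mathcal Q_l$ and using Cauchy--Schwarz together with $\sum_{A\in\mathcal Q_l}\br(A)=1$ gives $\sum_{A\in\mathcal Q_l} \br(A)^{1/2} \le \lvert \mathcal Q_l\rvert^{1/2} \le \N(\X,q^{-l}\diam(\X))^{1/2}$ for $l\le l_{\max}$, and $\le \sqrt N$ for the singleton level. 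Substituting these bounds level by level into the deterministic inequality produces exactly the two contributions $q^{-(l_{\max}+1)p}\sqrt N$ and $\sum_{l=0}^{l_{\max}} q^{-lp}\sqrt{\N(\X,q^{-l}\diam(\X))}$, with the common factor $1/\sqrt S$ pulled out, i.e.\ the claimed bound $\E_q/\sqrt S$. The main obstacle is the deterministic chaining inequality itself: one must build genuinely nested partitions whose cardinalities are still controlled by the (a priori non-nested) covering numbers, while keeping honest track of the powers of $q$ and $2$, since a careless estimate either destroys the clean separation into the $\sqrt N$ residual plus the covering-number sum or inflates the constant beyond $2^{p-1}q^{2p}(\diam(\X))^p$.
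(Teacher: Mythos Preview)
Your overall strategy --- multiscale covers indexed by the covering numbers, a deterministic bound of $W_p^p$ by weighted per-cell mass discrepancies, then binomial variance plus Cauchy--Schwarz --- is exactly the paper's, and your probabilistic half is identical to theirs line for line.  The difference lies in how the deterministic inequality is obtained, and this is precisely the point you flag as ``the main obstacle''.

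Your sketch for that step (triangle inequality for $W_p$ together with $(a+b)^p\le 2^{p-1}(a^p+b^p)$) will not give the constant you claim.  Iterating the power inequality across $l_{\max}+2$ levels produces a factor $2^{(p-1)(l_{\max}+1)}$; a single triangle inequality into $l_{\max}+2$ terms followed by H\"older gives $(l_{\max}+2)^{p-1}$.  Either way the constant blows up with the depth.  The paper sidesteps both the constant problem and your nesting problem at once: rather than nested partitions, it builds an ultra-metric spanning tree $\T$ on the enlarged vertex set $\bigcup_{l=0}^{l_{\max}+1} Q_l\times\{l\}$, where each $Q_l$ is simply a set of centers of a minimal $q^{-l}\diam(\X)$-cover (no refinement needed), joined by edges of length $q^{-l}\diam(\X)$.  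It then invokes Kloeckner's \emph{exact} formula for the Wasserstein distance on an ultra-metric tree,
\[
(W_p^{\T}(\br,\bs))^p
=2^{p-1}\sum_{x}\bigl(h(\parent(x))^p-h(x)^p\bigr)\,\bigl|(S_\T\br)_x-(S_\T\bs)_x\bigr|,
\]
so the factor $2^{p-1}$ is built into the identity rather than accumulated by telescoping.  The height estimate $q^{-l}\diam(\X)\le h(x)\le 2q^{-l}\diam(\X)$ then yields $h(\parent(x))^p-h(x)^p\le q^{2p}q^{-lp}(\diam(\X))^p$, which is where the $q^{2p}$ comes from.  Since $(S_\T\br)_x$ is just the mass of $\br$ in the subtree below $x$ --- your $\br(A)$ --- the two arguments merge from that point on.  In short: right plan, but the mechanism you name for the key inequality does not deliver the stated constant; the ultra-metric tree formula is what makes it clean, and it also removes the need to manufacture nested partitions out of non-nested covers.
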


\begin{rem}
Since Theorem \ref{thm:mean_rate_null} holds for any integer $q \ge 2$ and $l_{\max} \in \NN$, they can be chosen freely to minimize the constant $\E_q$. In the proof they appear as the branching number and depth of a spanning tree that is constructed on $\X$ (see appendix).  In general, an optimal choice of $q$ and $l_{\max}$ cannot be given.  However, in the Euclidean case, the optimal values for $q$ and $l_{\max}$ will be determined, and in particular we will show that $q=2$ is optimal (see the discussion after Theorem~\ref{thm:grid}, and Lemma~\ref{lem:q2}).
\end{rem}

\begin{rem}[covering by arbitrary sets]\label{rem:N1}
At the price of a factor $2^p$, we can replace the balls defining the covering numbers $\mathcal N$ with arbitrary sets, and obtain the bound
\[
    \E_q  = 
    2^{2p-1}q^{2p} (\diam(\X))^p
    \left( q^{-(l_{\max} + 1)p} \sqrt{N} + \sum_{l=0}^{l_{\max}} q^{-lp}
    \sqrt{\mathcal{N}_1(\X, q^{-l}\diam(\X))} \right),
\]
where $\mathcal N_1(\X,\delta)$ is the minimal number of closed sets of diameter $\le2\delta$ needed to cover $\X$.  The proof is given in the appendix.  These alternative covering numbers lead to better bounds in high-dimensional Euclidean spaces when $p>2.5$ (see Remark~\ref{rem:verger}).
\end{rem}

Based on Theorem \ref{thm:mean_rate_null}, we can formulate a bound for the mean
approximation error of Algorithm \ref{Subsamplealgo}.  A mean squared error version is given below, in Theorem~\ref{thm:mean_squared_alt}.

\begin{thm}
  \label{thm:mean_rate_alt}
  Let $\WhnmB_p(\br, \bs)$ be as in Algorithm \ref{Subsamplealgo} for any choice of $B \in \NN$.
  Then for every integer $q\geq 2$
  \begin{equation}
    E\left[ \left| \WhnmB_p(\br, \bs) - W_p(\br, \bs) \right| \right] 
    \leq 
    2\E_q^{1/p} S^{-1/(2p)}.
    \label{eq:mean_app_err}
  \end{equation}
\end{thm}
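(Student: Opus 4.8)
The plan is to reduce the statement to the single-repetition case and then to the one-sided quantities \(E[W_p(\brh_S,\br)]\), which Theorem~\ref{thm:mean_rate_null} already controls. I would first exploit that \(\WhnmB_p(\br,\bs)\) is an average of \(B\) identically distributed copies of \(W_p(\brh_S,\bsh_S)\), each targeting the same value \(W_p(\br,\bs)\). Centring the average and invoking the triangle inequality (equivalently, convexity of \(|\cdot|\) via Jensen) followed by linearity of expectation gives
\[
  E\bigl[|\WhnmB_p(\br,\bs)-W_p(\br,\bs)|\bigr]
  \le B^{-1}\sum_{i=1}^B E\bigl[|W_p(\brh_{S,i},\bsh_{S,i})-W_p(\br,\bs)|\bigr]
  = E\bigl[|W_p(\brh_S,\bsh_S)-W_p(\br,\bs)|\bigr],
\]
so that \(B\) drops out of the bound and it suffices to treat a single sample.

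Next I would use that \(W_p\) is a metric and apply the triangle inequality in both directions to obtain the pointwise estimate
\[
  |W_p(\brh_S,\bsh_S)-W_p(\br,\bs)|\le W_p(\brh_S,\br)+W_p(\bsh_S,\bs).
\]
Taking expectations, the two terms are treated identically. Since \(t\mapsto t^{1/p}\) is concave for \(p\ge1\), Jensen's inequality gives \(E[W_p(\brh_S,\br)]=E[(W_p^p(\brh_S,\br))^{1/p}]\le(E[W_p^p(\brh_S,\br)])^{1/p}\), and Theorem~\ref{thm:mean_rate_null} bounds the inner expectation by \(\E_q/\sqrt{S}\). This yields \(E[W_p(\brh_S,\br)]\le\E_q^{1/p}S^{-1/(2p)}\), and the same bound holds for \(\bs\) with the identical constant \(\E_q=\E_q(\X,p)\), because Theorem~\ref{thm:mean_rate_null} applies to any probability measure on \(\X\). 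Summing the two marginal contributions produces the factor \(2\) and the claimed bound \(2\E_q^{1/p}S^{-1/(2p)}\).

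There is no serious analytic obstacle: the result is essentially a two-step Jensen argument wrapped around Theorem~\ref{thm:mean_rate_null}. The only point deserving care is that the two applications of Jensen run in opposite directions — convexity of the absolute value to pull the average outside, and concavity of the \(p\)-th root to exchange expectation and power — and that the symmetry of \(\E_q\) in its measure argument is what lets both marginal errors share the same constant. It is also worth stressing that this argument controls only the \(L_1\) error and therefore registers no variance reduction from large \(B\); capturing the benefit of averaging would instead require the mean-squared-error analysis recorded in Theorem~\ref{thm:mean_squared_alt}.
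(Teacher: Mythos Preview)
Your argument is correct and matches the paper's proof essentially line for line: reverse triangle inequality for $W_p$, Jensen's inequality for the concave map $t\mapsto t^{1/p}$, and Theorem~\ref{thm:mean_rate_null}. The only difference is that you spell out the reduction from $B$ repetitions to a single one explicitly, whereas the paper absorbs this into the first inequality.
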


  \begin{proof}
  The statement is an immediate consequence of the reverse triangle inequality for the Wasserstein distance, Jensen's inequality and Theorem \ref{thm:mean_rate_null},
  \begin{multline*}
    E\left[ \left| \WhnmB_p(\br, \bs) - W_p(\br, \bs)\right| \right] \leq 
    E\left[ W_p(\brh_S, \br) + W_p(\bsh_S, \bs) \right]
    \\  \leq E\left[ W_p^p(\brh_S, \br) \right]^{1/p} + E\left[W_p^p(\bsh_S, \bs) \right]^{1/p}
    \leq 2\E_q^{1/p} / S^{1/(2p)}.
  \end{multline*}
\end{proof}

\paragraph{Measures on Euclidean Space}
While the constant $\E_q$ in Theorem \ref{thm:mean_rate_null} may be difficult to compute or estimate in general, we give explicit bounds in the case when $\X$ is a finite subset of a Euclidean space. They exhibit the dependence of the approximation error on $N=|\X|$. In particular, it comprises the case when the measures represent images (two- or more dimensional).

\begin{thm}
  \label{thm:grid}
  Let $\X$ be a finite subset of $\RR^D$ with the usual Euclidean metric. Then,
  \begin{equation*}
    \E_2 \le 
D^{p/2} 2^{3p-1}(\diam(\X))^p \cdot C_{D,p}(N),
  \end{equation*}
      where $N=|\X|$ and
  \begin{equation}\label{eq:CDpN}
    C_{D,p}(N) = 
    \begin{cases}
      1 / (1 - 2^{D/2 - p} ) & \text{if } D < 2p, \\
      2 + D^{-1}\log_2N & \text{if } D = 2p, \\
      N^{1/2 - p/D}[2+1/(2^{D/2 - p} - 1)] & \text{if } D > 2p.
    \end{cases}
  \end{equation}
\end{thm}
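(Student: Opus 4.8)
The plan is to re-run the spanning-tree/multiresolution construction underlying Theorem~\ref{thm:mean_rate_null}, specialized to a dyadic grid of axis-parallel cubes adapted to $\X\subseteq\RR^D$, so as to exhibit a value of $\E_2$ of the claimed size. Enclosing $\X$ in a cube of side at most $\diam(\X)$ and bisecting each coordinate $l$ times (this is the regime $q=2$) produces at most $\min(2^{lD},N)$ nonempty cells $\mathcal P_l$ at level $l$, each of diameter at most $\sqrt D\,2^{-l}\diam(\X)$. The crucial accounting point is to charge each level to its cell \emph{diameter} and its cell \emph{count} separately: the diameter enters the transport cost as $\bigl(\sqrt D\,2^{-l}\diam(\X)\bigr)^p$, and the factor $(\sqrt D)^p=D^{p/2}$ is the only source of dimensional dependence, whereas the count enters only as $\sqrt{2^{lD}}=2^{lD/2}$ under a square root. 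Bounding the ball covering numbers at radius $2^{-l}\diam(\X)$ directly would instead force cells of that much smaller diameter and inflate the count by a factor growing with $D$; keeping diameter and count apart is exactly what produces the polynomial dependence $D^{p/2}$ rather than an exponential one.

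With the partitions in hand I would invoke the tree transport inequality $E[W_p^p(\brh_S,\br)]\le\sum_l(\diam\mathcal P_l)^p\,E\sum_{A\in\mathcal P_l}|\brh_S(A)-\br(A)|$, and control each mass fluctuation by its standard deviation: since $\brh_S(A)$ is binomial, $E|\brh_S(A)-\br(A)|\le\sqrt{\br(A)/S}$, whence by Cauchy--Schwarz
\[
  E\sum_{A\in\mathcal P_l}\bigl|\brh_S(A)-\br(A)\bigr|
  \;\le\;\frac1{\sqrt S}\sum_{A\in\mathcal P_l}\sqrt{\br(A)}
  \;\le\;\sqrt{\tfrac{|\mathcal P_l|}{S}}\;\le\;\sqrt{\tfrac{\min(2^{lD},N)}{S}} .
\]
Substituting the diameters and collecting the fixed prefactor $2^{p-1}2^{2p}=2^{3p-1}$ and the pulled-out factor $D^{p/2}(\diam(\X))^p$, the task reduces to bounding $\sum_{l=0}^{l_{\max}}2^{-lp}\sqrt{\min(2^{lD},N)}+2^{-(l_{\max}+1)p}\sqrt N$, i.e.\ a geometric series of ratio $2^{D/2-p}$ together with a $\sqrt N$ tail.

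I would then take $l_{\max}\approx D^{-1}\log_2 N$, the level at which $2^{l_{\max}D}$ first reaches $N$, so that the covering sum and the tail are balanced, and split into three regimes to recover $C_{D,p}(N)$. For $D<2p$ the ratio is below $1$, the series converges to $1/(1-2^{D/2-p})$ and the tail $\sim N^{1/2-p/D}$ is negligible; for $D=2p$ every term equals $1$, giving $l_{\max}+1\le 2+D^{-1}\log_2 N$; and for $D>2p$ the series is dominated by its last term $2^{l_{\max}(D/2-p)}\approx N^{1/2-p/D}$, which together with the comparable tail yields $N^{1/2-p/D}\bigl[2+1/(2^{D/2-p}-1)\bigr]$. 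Collecting factors gives the stated bound.

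I expect the main obstacle to be twofold. First, pinning down the tree transport inequality with exactly the constant $2^{3p-1}$ while honouring the ``centres in $\X$'' requirement --- in particular verifying that each level may be charged at diameter $\sqrt D\,2^{-l}\diam(\X)$ with no extra $(\text{const})^D$ factor, so that the dimension surfaces only as $D^{p/2}$; the telescoping of edge lengths along tree paths and the convexity factor $2^{p-1}$ are where the constants are most delicate. Second, the truncation bookkeeping: the rounding in $l_{\max}=\lceil D^{-1}\log_2 N\rceil$ and the matching of the geometric tail against the $\sqrt N$ term must be arranged so that the additive constants land precisely on the ``$2$'' and on $1/(2^{D/2-p}-1)$ in $C_{D,p}(N)$, with the critical case $D=2p$ (ratio equal to one) treated on its own.
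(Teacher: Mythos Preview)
Your proposal is correct and follows the same underlying idea as the paper: bound the covering numbers by dyadic cube counts $2^{lD}$, sum the geometric series with ratio $2^{D/2-p}$, choose $l_{\max}\approx D^{-1}\log_2 N$, and split into the three regimes.

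The paper's execution differs in one practical respect that resolves both of your anticipated obstacles at once. Rather than re-run the tree construction with Euclidean cube diameters, the paper applies Theorem~\ref{thm:mean_rate_null} \emph{verbatim} but with the metric $d$ taken to be the $\ell_\infty$ distance. In that metric a cube of side $2\epsilon$ \emph{is} a ball of radius $\epsilon$, so $\mathcal N(\X,q^{-l}\diam(\X),\|\cdot\|_\infty)\le q^{lD}$ with centres automatically in $\X$ (via the standard $\epsilon\to\epsilon/2$ recentring), and the constant $2^{p-1}q^{2p}=2^{3p-1}$ is inherited directly from \eqref{eq:Eq} with no further bookkeeping. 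Only at the very end does the paper convert back to the Euclidean Wasserstein distance via $\|a\|_2\le\sqrt D\,\|a\|_\infty$, which supplies the factor $D^{p/2}$. This is equivalent to your ``charge each level to its cube diameter'' accounting, but packages it as a single norm comparison rather than a level-by-level argument, and spares you from re-deriving the ultra-metric tree formula or tracking the $h(\parent(x))^p-h(x)^p$ telescoping. Your worry about the truncation constants is handled exactly as you outline: $l_{\max}=\lfloor D^{-1}\log_2 N\rfloor$ (or $l_{\max}\to\infty$ when $D<2p$), and the ``$2$'' in $C_{D,p}(N)$ absorbs both the rounding and the $\sqrt N$ tail.
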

One can obtain bounds for $\E_q$, $q>2$ (see the proof), but the choice $q=2$ leads to the smallest bound (Lemma~\ref{lem:q2}(a), page \pageref{lem:q2}).  Further, if $p$ is an integer, then
\[
    C_{D,p}(N) \le 
    \begin{cases}
      2 + \sqrt 2 & \text{if } D < 2p, \\
      2 + D^{-1}\log_2N & \text{if } D = 2p, \\
      (3+\sqrt 2)N^{1/2 - p/D} & \text{if } D > 2p
    \end{cases}
\]
(see Lemma~\ref{lem:q2}(b).)

In particular, we have for the most important cases $p=1,2$:
\begin{cor}
Under the conditions of Theorem~\ref{thm:grid},
  \begin{align*}
    p=1 \quad \Longrightarrow \quad &\E_2 \le 
    4D^{1/2}\diam(\X) \cdot
    \begin{cases}
      1 / (1 - 2^{D/2 - 1} )  & \text{if } D < 2, \\
      2 + (1/2)\log_2N  & \text{if } D=2, \\
      N^{1/2 - 1/D}[2+1/(2^{D/2 - 1} - 1)]   & \text{if } D>2.
    \end{cases}\\
    p=2 \quad \Longrightarrow \quad &\E_2 \le 
    32D(\diam(\X))^2 \cdot
    \begin{cases}
      1 / (1 - 2^{D/2 - 2} )  & \text{if } D < 4, \\
      2 + (1/4)\log_2N  & \text{if } D=4, \\
      N^{1/2 - 2/D}[2+1/(2^{D/2 - 2} - 1)]   & \text{if } D>4.
    \end{cases}
  \end{align*}
\end{cor}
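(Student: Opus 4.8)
The plan is to obtain this Corollary as a direct specialization of Theorem~\ref{thm:grid}, substituting $p=1$ and then $p=2$ into the general bound
\[
\E_2 \le D^{p/2} 2^{3p-1}(\diam(\X))^p \cdot C_{D,p}(N)
\]
and simplifying the numerical constants. No new estimates are needed: the entire content is bookkeeping of exponents together with a relabelling of the three case boundaries. In particular, everything analytic has already been established in Theorem~\ref{thm:grid}, so what remains is purely arithmetic.

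First I would evaluate the prefactor $D^{p/2}2^{3p-1}(\diam(\X))^p$ in each case. For $p=1$ this is $D^{1/2}\cdot 2^{2}\cdot\diam(\X)=4D^{1/2}\diam(\X)$, and for $p=2$ it is $D\cdot 2^{5}\cdot(\diam(\X))^2=32D(\diam(\X))^2$; these are exactly the factors standing in front of the braces in the two displays of the Corollary.

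Next I would translate the three branches of $C_{D,p}(N)$. The threshold $2p$ becomes $2$ when $p=1$ and $4$ when $p=2$, so the cases split into $D<2,\,D=2,\,D>2$ and $D<4,\,D=4,\,D>4$ respectively. In each branch the exponent $D/2-p$ appearing in $2^{D/2-p}$ specializes to $D/2-1$ for $p=1$ and to $D/2-2$ for $p=2$; the logarithmic middle case $2+D^{-1}\log_2 N$ yields $2+(1/2)\log_2 N$ and $2+(1/4)\log_2 N$; and the power $N^{1/2-p/D}$ in the large-dimension case becomes $N^{1/2-1/D}$ and $N^{1/2-2/D}$. Collecting these substitutions reproduces the two stated bounds verbatim.

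The only point requiring attention is that these substitutions be transcribed without error, in particular the value of $2^{3p-1}$ and the relocated thresholds $2p$; there is no analytic obstacle to overcome, since the choice $q=2$ and the bound on $\E_2$ were already justified in Theorem~\ref{thm:grid}.
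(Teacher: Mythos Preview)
Your proposal is correct and matches the paper's approach: the paper gives no separate proof for this Corollary, treating it as an immediate specialization of Theorem~\ref{thm:grid} by substituting $p=1$ and $p=2$. Your arithmetic for the prefactor $D^{p/2}2^{3p-1}(\diam(\X))^p$ and the relabelling of the three cases of $C_{D,p}(N)$ is exactly what is needed.
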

\begin{rem}[improved bounds in high dimensions]\label{rem:verger}
The term $D^{p/2}$ appears because in the proof of Theorem~\ref{thm:grid} we switch between the Euclidean norm and the supremum norm.  One may wonder whether this change of norms is necessary.  We can stay in the Euclidean setting, and may assume without loss of generality that $\X$ is included in $B_{\diam(\X)}(0)$, where $B_r(x)=\{y:\|y-x\|_2\le r\}$ is the closed ball of radius $r$ around $x$.  According to \cite{verger2005covering}, there exists an absolute constant $C$ such that $\mathcal N(B_1(0),\epsilon)\le C^2D^{5/2}\epsilon^{-D}$.  Using this would allow to replace $D^{p/2}$ by $C2^{D/2}D^{5/4}$, or, combining the alternative covering numbers $\mathcal N_1$ (Remark~\ref{rem:N1}), by $C2^pD^{5/4}$.  This is better than $D^{p/2}$ when $p>2.5$ and $D$ is large.
\end{rem}

Theorem~\ref{thm:grid} gives control over the error made by the approximation $\WhnmB_p(\br,\bs)$ of $W_p(\br, \bs)$. Of particular interest is the behavior of this error as $N$ gets large (e.g., for high resolution images).  We distinguish three cases. In the \textit{low-dimensional case} $p'=D/2 - p<0$, we have $C_{D,p}(N) = \mathcal{O}(1)$ and the approximation error is $\mathcal{O}(S^{-\frac{1}{2p}})$ independent of the size of the image. In the \textit{critical case} $p' = 0$ the approximation error is no longer independent of $N$ but is of order $\mathcal{O}\left(
\log(N)S^{-\frac{1}{2p}} \right)$. Finally, in the \textit{high-dimensional case} the dependence on $N$ becomes stronger with an approximation error of order 
\[
\mathcal{O}
\left( 
\left(\frac{N^{(1- \frac{2p}{D})}}{S}\right)^{\frac{1}{2p}}
\right).
\]
In all cases one can choose $S=o(N)$ while still guaranteeing vanishing approximation error for $N\ra \infty$. In practice, this means that $S$ can typically be chosen (much) smaller than $N$ to obtain a good approximation of the Wasserstein distance.  In particular, this implies that for low-dimensional applications with two or three dimensional histograms (for example greyscale images, where $N$ corresponds to the number of pixels / voxels  and $\br, \bs$ correspond to the grey value distribution after normalization), the approximation error is essentially not affected by the size of the problem when $p$ is not too small, e.g., $p=2$.

While the three cases in Theorem \ref{thm:grid} resemble those given by \cite{Boissard2014},
the rate of convergence in $S$ as seen in Theorem \ref{thm:mean_rate_null} is
$\mathcal{O}(S^{-1/2})$, regardless of the dimension of the underlying space $\X$.  The constant depends on $D$, however, roughly at the polynomial rate $D^{p/2}$ and through $C_{D,p}(N)$.  It is also worth mentioning that by considering the dual transport problem, one can invoke the framework of \cite{shalev_learnability_2010}, particularly Theorem~7.  However, the dependence on $S$ and $N$ and the constants are not easily accessible from that paper.
\begin{rem}
The results presented here extend to the case where $\X$ is a bounded, countable subset of $\RR^D$.  However, our bounds for $\E_q$ contain the term $C_{D,p}(N)$, which is finite as $N\to\infty$ in the low-dimensional case ($D< 2p$) but infinite otherwise.  Finding a better bound for $\E_q$ when $\X$ is countable is challenging and an interesting topic for further research.
\end{rem}

\subsection{Concentration bounds}
\label{sec:conc}
Based on the bounds for the expected approximation error we now give
non-asymptotic guarantees for the approximation error in the form of deviation
bounds using standard concentration of measure techniques.
\begin{thm}
    \label{thm:simple_concentration}
    If $\WhnmB_p(\br,\bs)$ is obtained from Algorithm \ref{Subsamplealgo}, then
    for every $z\geq 0$
    \begin{equation}
         P\left[ |\WhnmB_p(\br, \bs) - W_p(\br,\bs)| \geq z +
        \frac{2\E_q^{1/p}}{S^{\nicefrac{1}{2p}}} \right] 
         \leq 
        2 \exp\left(
        -\frac{SBz^{2p}}{8\:\diam(\X)^{2p}} \right).
      \label{eq:simple_concentration}
    \end{equation}
\end{thm}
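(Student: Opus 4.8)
The plan is to combine the expected-error control of Theorem~\ref{thm:mean_rate_alt} (and its source, Theorem~\ref{thm:mean_rate_null}) with a bounded-differences (McDiarmid) argument, but carried out at the level of the $p$-\emph{th powers} of the transport costs rather than the costs themselves; this is exactly what produces the $z^{2p}$ and the favourable single power of $S$ in the exponent. First I would use the reverse triangle inequality for $W_p$ in each repetition to dominate the whole deviation by the empirical-to-truth cost,
\begin{equation*}
  \left| \WhnmB_p(\br,\bs) - W_p(\br,\bs) \right|
  \le T := \frac1B\sum_{i=1}^B \left[ W_p(\brh_{S,i},\br) + W_p(\bsh_{S,i},\bs) \right],
\end{equation*}
so that it suffices to bound $P[T \ge z + m]$ with $m := 2\E_q^{1/p}S^{-1/(2p)}$. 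By Jensen's inequality and Theorem~\ref{thm:mean_rate_null} one has $E[T] \le 2(\E_q/\sqrt S)^{1/p} = m$, so the offset $m$ in the statement is precisely (twice) the expected-error bound of Theorem~\ref{thm:mean_rate_alt}.

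Next I would pass to $p$-th powers in order to exploit a much smaller bounded difference. Two applications of the power-mean inequality (once over the two summands, once over the $B$ repetitions) give $T^p \le 2^{p-1}\tilde T$, where
\begin{equation*}
  \tilde T := \frac1B\sum_{i=1}^B \left[ W_p^p(\brh_{S,i},\br) + W_p^p(\bsh_{S,i},\bs) \right],
\end{equation*}
so that $\{T \ge z+m\} \subseteq \{\tilde T \ge 2^{1-p}(z+m)^p\}$. By Theorem~\ref{thm:mean_rate_null}, $E[\tilde T] \le 2\E_q/\sqrt S = 2^{1-p}m^p$, and the superadditivity $(z+m)^p \ge z^p + m^p$ (valid for $p\ge1$) shows that the threshold satisfies $2^{1-p}(z+m)^p \ge 2^{1-p}z^p + E[\tilde T]$. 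Hence the event is contained in the centred event $\{\tilde T - E[\tilde T] \ge 2^{1-p}z^p\}$, which is now in the correct form for a concentration inequality.

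The heart of the argument is then a bounded-differences estimate for $\tilde T$, viewed as a function of the $2SB$ i.i.d.\ sample points. Replacing a single sample relocates a mass $1/S$ in one empirical measure; rerouting that mass through an optimal coupling shows that the corresponding $W_p^p(\brh_{S,i},\br)$ changes by at most $\diam(\X)^p/S$, so $\tilde T$ changes by at most $\diam(\X)^p/(SB)$. With $2SB$ coordinates this gives $\sum_j c_j^2 = 2\diam(\X)^{2p}/(SB)$, and McDiarmid's inequality yields $P[\tilde T - E\tilde T \ge s] \le \exp(-s^2 SB/\diam(\X)^{2p})$. Substituting $s = 2^{1-p}z^p$ produces the exponent $z^{2p}SB/(2^{2p-2}\diam(\X)^{2p})$, which dominates the claimed $SBz^{2p}/(8\diam(\X)^{2p})$ for all $p\le 5/2$ (in particular for the important cases $p=1,2$); the prefactor $2$ in the statement is then slack, and a two-tail variant that splits $T$ into its two averages matches the constant $8$ exactly at $p=2$.

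I expect the main obstacle to be conceptual rather than computational. The naive route — applying McDiarmid directly to $\WhnmB_p$, whose single-sample sensitivity is only $\diam(\X)S^{-1/p}$ — yields the strictly weaker exponent $z^2 S^{2/p-1}$; the stated $z^{2p}S$ rate is accessible \emph{only} by concentrating the $p$-th-power functional $\tilde T$, whose sensitivity is the much smaller $\diam(\X)^p/S$. The secondary delicate point is bookkeeping the multiplicative constants through the two power-mean steps and the superadditivity bound so that they collapse into the single constant in the exponent; tracking these carefully is routine but is where one must be attentive.
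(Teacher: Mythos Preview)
Your argument is sound as far as it goes, but it does not prove the theorem in full generality: the McDiarmid route with the power-mean step only recovers the stated exponent for $p\le 5/2$, as you yourself note. Since the theorem is asserted for all $p\ge1$, this is a genuine gap, not just a loose constant. The loss is structural: passing from $T$ to $\tilde T$ costs a factor $2^{p-1}$ at the threshold, hence $4^{p-1}$ in the exponent, and no amount of splitting $T$ into its $\br$- and $\bs$-parts removes it, because the same power-mean step over the $B$ repetitions (or over the two summands) reappears.

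The paper avoids this loss by a different mechanism. It does \emph{not} pass to $p$-th powers and does \emph{not} use McDiarmid. Instead it works directly with the functional $Z=\WhnmB_p-W_p$ and shows that $Z/2$ is $(SB)^{-1/p}$-Lipschitz with respect to the $\ell_p$-product metric on $(\X^2)^{SB}$ built from $d_{\X^2}((x,y),(x',y'))=(d^p(x,x')+d^p(y,y'))^{1/p}$. Because $\X^2$ is bounded, each coordinate measure satisfies the transportation-entropy inequality $W_p(\tilde\br,\tilde\bs)\le (8\,\diam(\X)^{2p}\,H(\tilde\br\,|\,\tilde\bs))^{1/(2p)}$ of \cite{bolley_weighted_2005}. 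This inequality tensorizes and, via \cite[Lemma~6]{gozlan_large_2007}, yields concentration for Lipschitz functions with tail $\exp(-SBz^{2p}/(8\diam(\X)^{2p}))$ directly, for every $p\ge1$; a union bound over $\pm Z$ gives the factor~$2$, and the shift by $2\E_q^{1/p}S^{-1/(2p)}$ comes from Theorem~\ref{thm:mean_rate_alt} exactly as you argue. In short, the key idea you are missing is to exploit the stronger-than-$T_2$ transportation inequality available on a bounded space, rather than the $T_2$-type bound implicit in McDiarmid; that is what buys the correct $z^{2p}$ exponent uniformly in $p$.
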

Note that while the mean approximation quality $2\E_q^{1/p} / S^{1/(2p)}$ only
depends on the subsample size $S$, the stochastic variability (see the right
hand side term in \eqref{eq:simple_concentration}) depends on the product
$SB$. This means that the repetition number $B$ cannot decrease the expected
error but it decreases the magnitude of fluctuation around it.

From these concentration bounds we can obtain a mean squared error version of Theorem~\ref{thm:mean_rate_alt}:
\begin{thm}
  \label{thm:mean_squared_alt}
  Let $\WhnmB_p(\br, \bs)$ be as in Algorithm \ref{Subsamplealgo} for any choice of $B \in \NN$. Then for every integer $q\ge 2$ the mean squared error of the EOT can be bounded as 
\[
    E\left[ \left| \WhnmB_p(\br, \bs) - W_p(\br, \bs) \right| ^2\right] 
    \le 18\E_q^{2/p}S^{-1/p}
        =\mathcal O(S^{-1/p}).
\]
\end{thm}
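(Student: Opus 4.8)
The plan is to obtain the second-moment bound directly from the deviation inequality of Theorem~\ref{thm:simple_concentration} via the layer-cake (tail-integral) representation. Write $X=|\WhnmB_p(\br,\bs)-W_p(\br,\bs)|$ and set $a=2\E_q^{1/p}S^{-1/(2p)}$, so that Theorem~\ref{thm:simple_concentration} reads $P[X\ge a+z]\le 2\exp(-cz^{2p})$ with $c=SB/(8\diam(\X)^{2p})$. Starting from $E[X^2]=\int_0^\infty 2t\,P[X\ge t]\,\d t$, I would split the integral at $t=a$: on $[0,a]$ bound $P[X\ge t]\le 1$ to get a contribution of exactly $a^2$, and on $(a,\infty)$ substitute $t=a+z$ and insert the stretched-exponential tail.

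The two resulting integrals are evaluated with the Gamma function: the substitution $u=cz^{2p}$ gives $\int_0^\infty \exp(-cz^{2p})\,\d z=c^{-1/(2p)}\Gamma(1+\tfrac{1}{2p})$ and $\int_0^\infty z\exp(-cz^{2p})\,\d z=\tfrac12 c^{-1/p}\Gamma(1+\tfrac1p)$. Collecting terms yields
\[
  E[X^2]\le a^2 + 4a\,c^{-1/(2p)}\Gamma(1+\tfrac{1}{2p}) + 2c^{-1/p}\Gamma(1+\tfrac1p),
\]
where $c^{-1/(2p)}=8^{1/(2p)}\diam(\X)(SB)^{-1/(2p)}$ and $c^{-1/p}=8^{1/p}(\diam(\X))^2(SB)^{-1/p}$.

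The main obstacle is that this expression still carries the factors $\diam(\X)$ and $B$, whereas the target bound is expressed purely through $\E_q^{2/p}S^{-1/p}$; the crux is to absorb them. The $B$-dependence is immediate: since $B\ge1$ we may replace $(SB)^{-1/(2p)}$ and $(SB)^{-1/p}$ by $S^{-1/(2p)}$ and $S^{-1/p}$ (the worst case is $B=1$, consistent with the earlier remark that $B$ cannot improve the expected error). For $\diam(\X)$ the key observation is the elementary inequality $\E_q^{1/p}\ge 4\diam(\X)$: the $l=0$ summand in \eqref{eq:Eq} equals $\sqrt{\mathcal N(\X,\diam(\X))}=1$, because a single closed ball of radius $\diam(\X)$ covers $\X$, whence $\E_q\ge 2^{p-1}q^{2p}(\diam(\X))^p\ge 4^p(\diam(\X))^p$ using $q\ge2$ and $p\ge1$ (indeed $2^{p-1}q^{2p}\ge 2^{3p-1}\ge 4^p$). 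Substituting $\diam(\X)\le\E_q^{1/p}/4$ and $a^2=4\E_q^{2/p}S^{-1/p}$, and bounding $\Gamma(1+\tfrac{1}{2p}),\Gamma(1+\tfrac1p)\le1$ (valid since $\Gamma\le 1$ on $[1,2]$) together with $8^{1/(2p)}\le2\sqrt2$ and $8^{1/p}\le8$ for $p\ge1$, each of the three terms reduces to a constant multiple of $\E_q^{2/p}S^{-1/p}$. The bracketed constant comes out as $4+2\cdot2\sqrt2+1=5+4\sqrt2\approx10.66\le18$, which closes the argument; the value $18$ is deliberately loose so as to produce a clean statement.
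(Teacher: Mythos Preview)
Your proof is correct and follows the same overall strategy as the paper: apply the layer-cake formula to $E[X^2]$, split the integral around $a=2\E_q^{1/p}S^{-1/(2p)}$, plug in the concentration bound of Theorem~\ref{thm:simple_concentration}, and then absorb $\diam(\X)$ into $\E_q$ via the lower bound coming from the $l=0$ term in \eqref{eq:Eq}.

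The execution differs in two minor technical choices. First, the paper splits at $z=C$ (i.e., $t=2a$) and uses the elementary inequality $y^{2p}\ge y^2$ for $y\ge1$ to reduce the tail to a Gaussian integral that is evaluated in closed form; you instead split at $t=a$ and evaluate the stretched-exponential tail directly via the Gamma function, bounding $\Gamma(1+\tfrac1{2p}),\Gamma(1+\tfrac1p)\le1$. Second, the paper invokes the slightly sharper $\E_q\ge 2^{3p-1}(\diam\X)^p$, whereas you use the weaker consequence $\E_q^{1/p}\ge 4\diam(\X)$, which is still sufficient. Your route is arguably cleaner and in fact yields the smaller constant $5+4\sqrt2\approx10.66$ in place of the paper's $18$.
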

\begin{rem}
The power 2 can be replaced by any $\alpha\le 2p$ with rate $S^{-\alpha/(2p)}$, as can be seen from a straightforward modification of the first lines of the proof.
\end{rem}
For example, in view of Theorem~\ref{thm:grid}, when $\X$ is a finite subset of a $\RR^D$ and $q=2$, we obtain
\[
    E\left[ \left| \WhnmB_p(\br, \bs) - W_p(\br, \bs) \right| ^2\right] 
    \le 3^22^{7-2/p} D C_{D,p}^{2/p}(N)[\diam(\X)]^2S^{-1/p}.
\]
with the constant $C_{D,p}(N)$ given in \eqref{eq:CDpN}.  Thus, we qualitatively observe the same dependence on $N$ as in Theorem~\ref{thm:grid}, e.g., the mean squared error is independent of $N$ when $D<2p$.

\section{Simulations}

This section covers the numerical findings of the simulations. Runtimes and returned values of
Algorithm \ref{Subsamplealgo} for each back-end solver are reported in relation to the results of that solver on the original problem. Four different solvers are tested.

\subsection{Simulation Setup}

The setup of our simulations is identical to that of \cite{schrieber_dotmark_2016}. One single core of a Linux server (AMD Opteron Processor 6140 from 2011 with 2.6 GHz) was used. The original and subsampled instances were run under the same conditions.

Three of the four methods featured in this simulation are exact linear programming solvers. The transportation simplex is a modified version of the network simplex solver tailored towards optimal transport problems. Details can be found for example in \cite{Luenberger2008}. The shortlist method \citep{Gottschlich2014} is a modification of the transportation simplex, that performs an additional greedy step to quickly find a good initial solution. The parameters were chosen as the default parameters described in that paper. The third method is the network simplex
solver of CPLEX (\url{www.ibm.com/software/commerce/optimization/cplex-optimizer/}). For the transportation simplex and the shortlist method the implementations provided in the R package \textit{transport} \citep{Schuhmacher2014} were used. The models for the CPLEX solver were created and solved via the R package \textit{Rcplex} \citep{Rcplex}.

Additionally, the Sinkhorn scaling algorithm \citep{Cuturi2013a} was tested in our simulation. This method computes an entropy regularized optimal transport distance. The regularization parameter was chosen according to the heuristic in \cite{Cuturi2013a}. Note that the Sinkhorn distance is not covered by the theoretical results from Section \ref{sec:theory_app}. The errors reported for the Sinkhorn scaling are relative to the values returned by the algorithm on the full problems, which themselves differ from the actual Wasserstein distances.

The instances of optimal transport considered here are discrete instances of two different types: regular grids in two dimensions, that means images in various resolutions, as well as point clouds in $\left[0,1\right]^D$ with dimensions $D = 2$, $3$ and $4$. For the image case, from the DOTmark, which contains images of various types intended to be used as optimal transport instances in the form of two-dimensional histograms, three instances were chosen: two images of each of the classes White Noise, Cauchy Density, and Classic Images, which are then treated in the three resolutions $32 \times 32$, $64 \times 64$ and $128 \times 128$.  Images are interpreted as finitely supported measures.  The mass of a pixel is given by the greyscale value and the support of the measure is the grid $\{1,\dots, R\} \times \{ 1, \dots, R\}$ for an image with resolution $R \times R$.

In the White Noise class the grayscale values of the pixels are independent of each
other, the Cauchy Density images show bivariate Cauchy densities with random
centers and varying scale ellipses, while Classic Images contains grayscale test
images. See \cite{schrieber_dotmark_2016} for further details on the different
image classes and example images. The instances were chosen to cover different
types of images, while still allowing for the simulation of a large variety of
parameters for subsampling.

The point cloud type instances were created as follows: The support points of the measures
are independently, uniformly distributed on $\left[0,1\right]^D$. The number of points $N$
was chosen $32^2$, $64^2$ and $128^2$ in order to match the size of the grid based instances.
For each choice of $D$ and $N$, three instances were generated with regards to the three
images types used in the grid based case. Two measures on the points are drawn from the
Dirichlet distribution with all parameters equal to one. That means, the masses on different points are independent of each other, similar to the white noise images. To create point cloud versions of the Cauchy Density and Classic Images classes the grayscale values of the same images were used to get the mass values for the support points. In three and four dimensions, the product measure of the images with their sum of columns and with themselves, respectively, was used.

All original instances were solved by each back-end solver in each resolution for the values $p = 1$, $p = 2$, and $p = 3$ in order to be compared to the approximative results for the subsamples in terms of runtime and accuracy, with the exception of CPLEX, where the $128 \times 128$ instances could not be solved due to memory limitations. Algorithm \ref{Subsamplealgo} was applied to each of these instances with parameters $S \in \{100, 500, 1000, 2000, 4000\}$ and $B \in \{1, 2, 5\}$. For every combination of instance and parameters, the subsampling algorithm was run $5$ times in order to mitigate the randomness of the results.

Since the linear programming solvers had a very similar performance on the grid based instances (see below), only one of them - the transportation simplex - was tested on the point cloud instances.

\subsection{Computational Results}

\begin{figure*}
  \centering
  \includegraphics[width=0.95\textwidth]{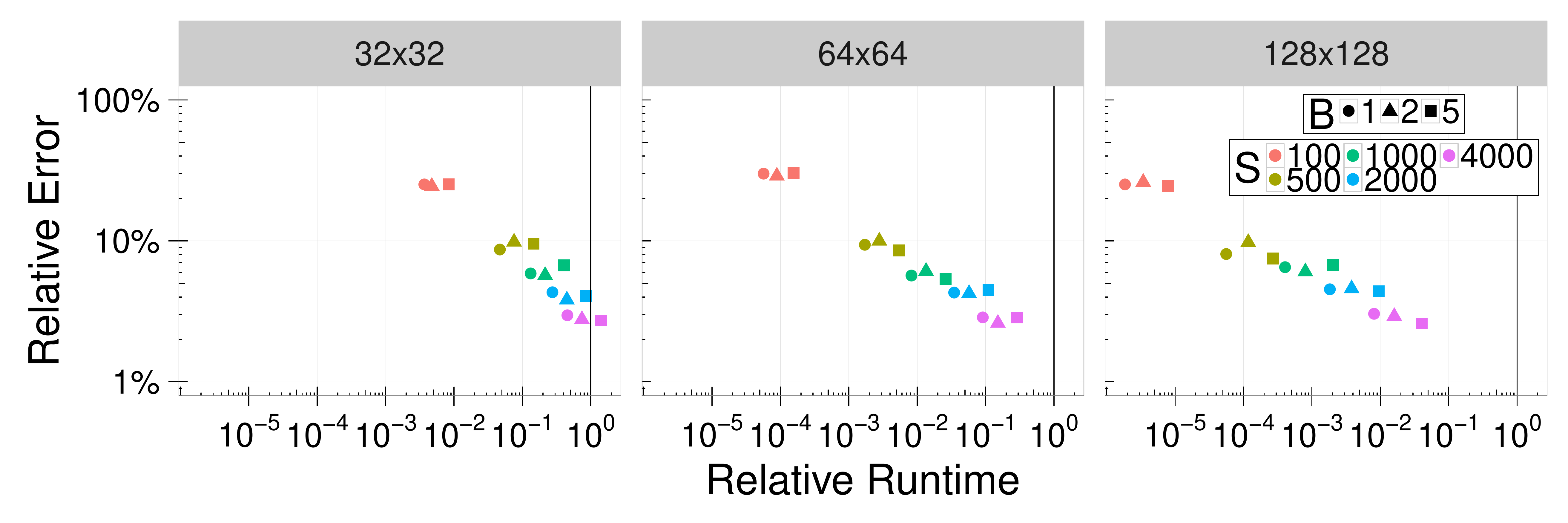}
  \caption {Relative errors $\lvert \WhnmB_p(\br, \bs) -
    W_p(\br, \bs) \rvert / W_p(\br, \bs)$ vs.\ relative runtimes $\hat t / t$ for different parameters
  $S$ and $B$ and different problem sizes for images. $\hat t$ is the runtime of Algorithm \ref{Subsamplealgo}
    and $t$ is the runtime of the respective back-end solver without subsampling.}
  \label{fig:overview}
\end{figure*}

\begin{figure*}
  \centering
  \includegraphics[width=0.95\textwidth]{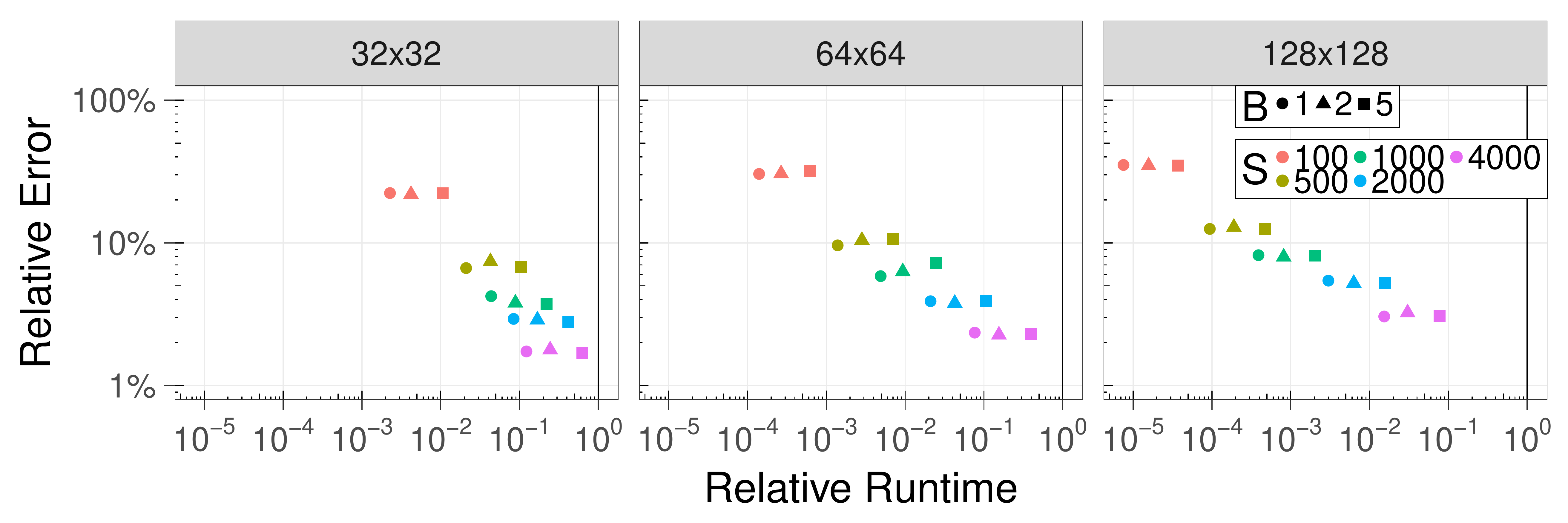}
  \caption{Relative errors vs.\ relative runtimes for different parameters
  $S$ and $B$ and different problem sizes for point clouds.
 The number of support points matches the number of pixels in the images.}
  \label{fig:PC_overview}
\end{figure*}

As mentioned before, all results of Algorithm \ref{Subsamplealgo} are relative
to the results of the methods applied to the original problems. We are mainly
interested in the reduction in runtime and accuracy of the returned values. Many
important results can be observed in Figure \ref{fig:overview} and \ref{fig:PC_overview}. The points in
the diagram represent averages over the different methods, instances, and multiple
tries, but are separated in resolution and choices of the parameters $S$ and $B$
in Algorithm \ref{Subsamplealgo}.

For images we observe a decrease in relative runtimes with higher resolution,
while the average relative error is independent of the image resolution.
In the point cloud case, however, the relative error increases slightly with
the instance size.
The number $S$ of sampled points seems to
considerably affect the relative error. An increase of the number of points
results in more accurate values, with average relative errors as low as about
$3 \%$ for $S = 4000$, while still maintaining a speedup of two orders of
magnitude on $128 \times 128$ images. Lower sample sizes yield 
higher average errors, but also lower runtimes. With $S = 500$ the
runtime is reduced by over four orders of magnitude with an average relative
error of less than $10 \%$. As to be expected, runtime increases linearly with
the number of repetitions $B$. However, the impact on the relative errors is
rather inconsistent. This is due to the fact, that the costs returned by
the subsampling algorithm are often overestimated, therefore averaging
over multiple tries does not yield improvements (see Figure \ref{fig:bias}).
This means that in order to increase the accuracy of the algorithm
it is advisable to keep $B = 1$ and instead increase the sample size $S$.
However, increasing $B$ can be useful to lower the variability of the results.
\begin{figure}
  \centering
  \includegraphics[width=0.49\textwidth]{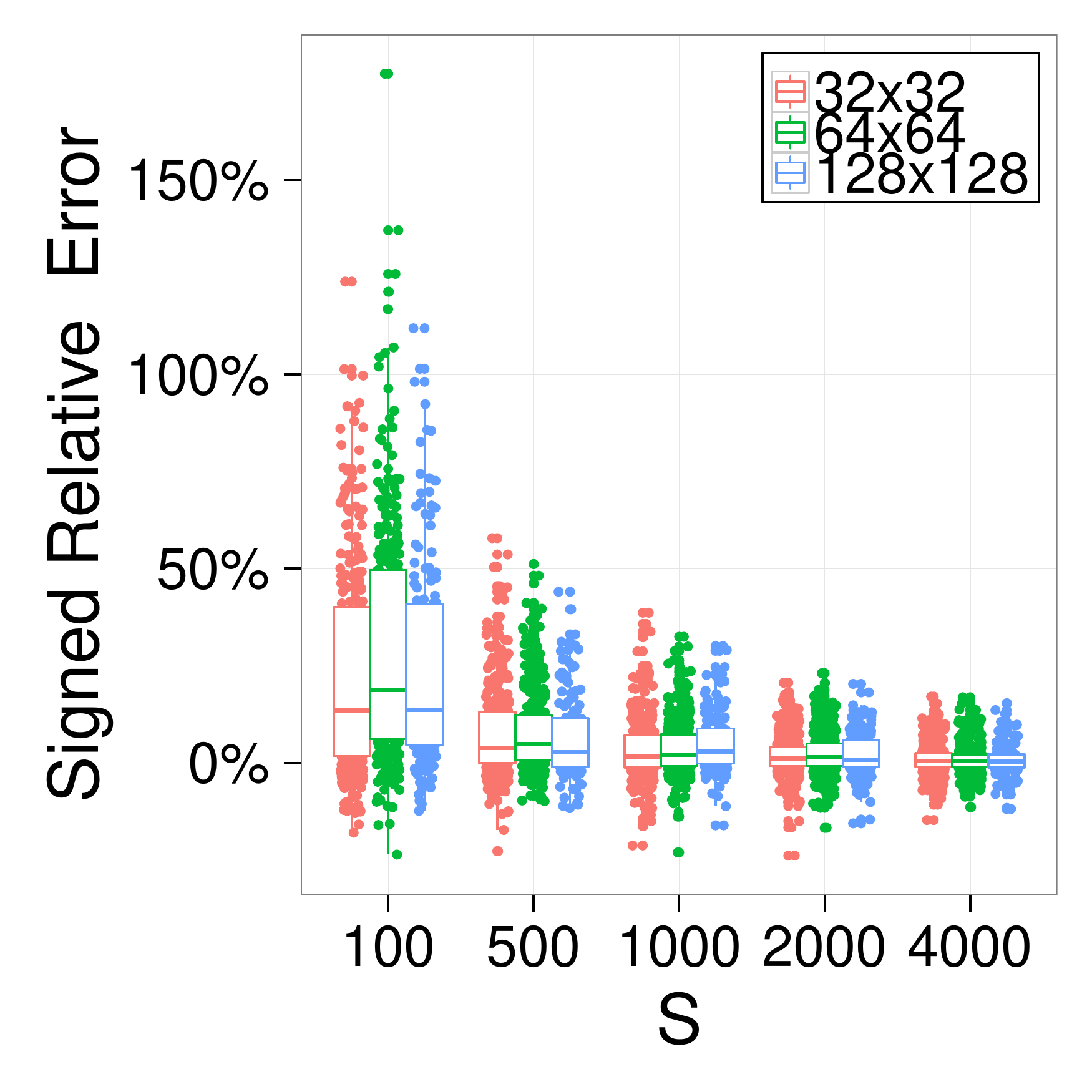}
  \caption{The signed relative approximation error $\left( \WhnmB_p(\br, \bs) -
    W_p(\br, \bs)\right) / W_p(\br, \bs)$ showing that the
  approximation overestimates the exact distance for small $S$ but the bias
  vanishes for
larger $S$. }
  \label{fig:bias}
\end{figure}

On the contrary, there is a big difference in accuracy between the image
classes. While Algorithm \ref{Subsamplealgo} has consistently low relative errors on the Cauchy Density images,
the exact optimal costs for White Noise images cannot be
approximated as reliably. The relative errors fluctuate more and are generally
much higher, as one can see from Figure
\ref{fig:classes} (left). In images with smooth structures and regular features the
subsamples are able to capture that structure and therefore deliver a more
precise representation of the images and a more precise value. This is not
possible in images that are very irregular or noisy, such as the White Noise
images, which have no structure to begin with. The Classic Images contain both
regular structures and more irregular regions,
therefore their relative errors are slightly higher than in the Cauchy Density cases.
The algorithm has a similar performance on the point cloud instances, that are modelled after the
Cauchy Density and Classic Images classes, while the Dirichlet instances have a more desirable
accuracy compared to the White Noise images, as seen in Figure \ref{fig:classes} (right).

\begin{figure*}
  \centering
  \includegraphics[width=0.49\textwidth]{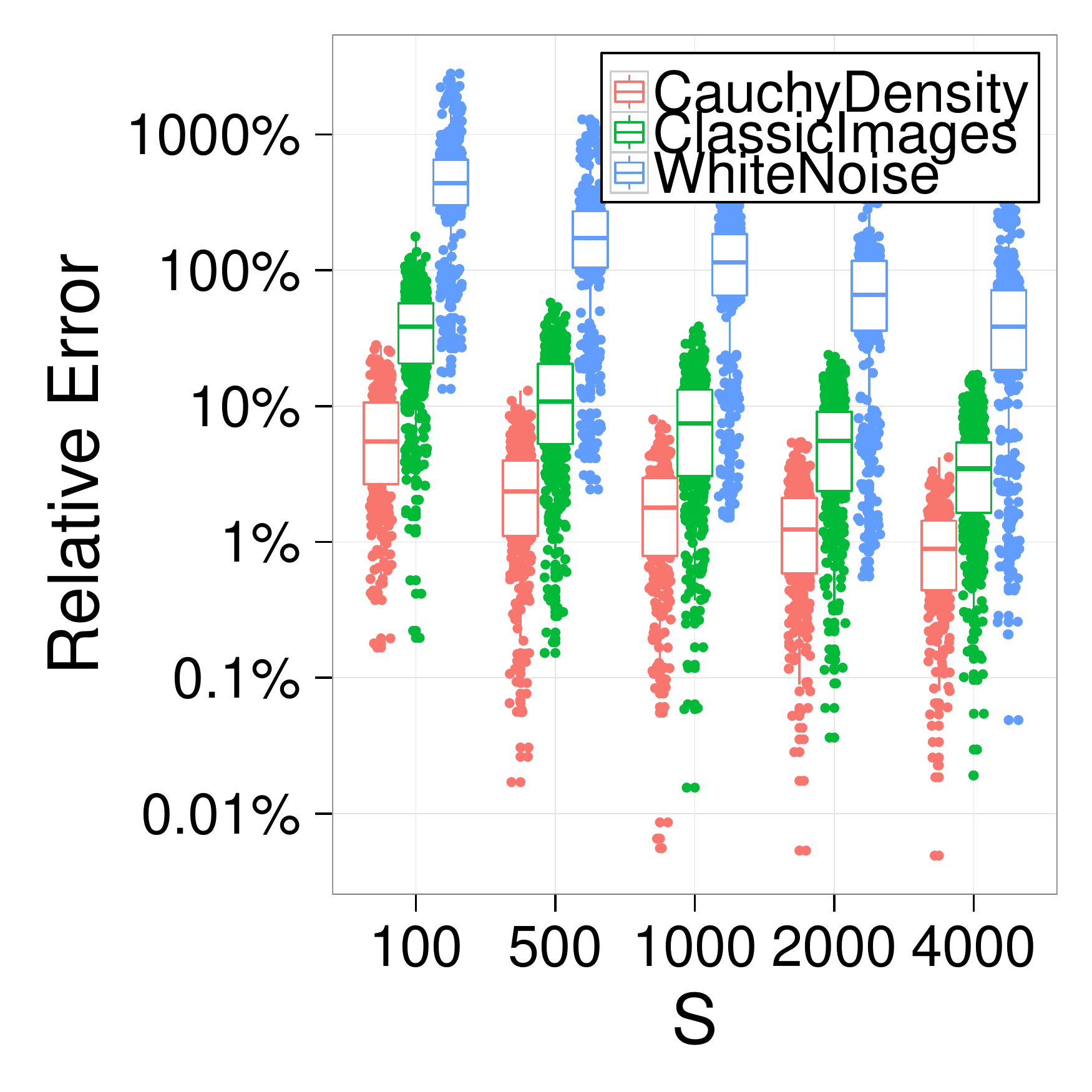}
  \includegraphics[width=0.49\textwidth]{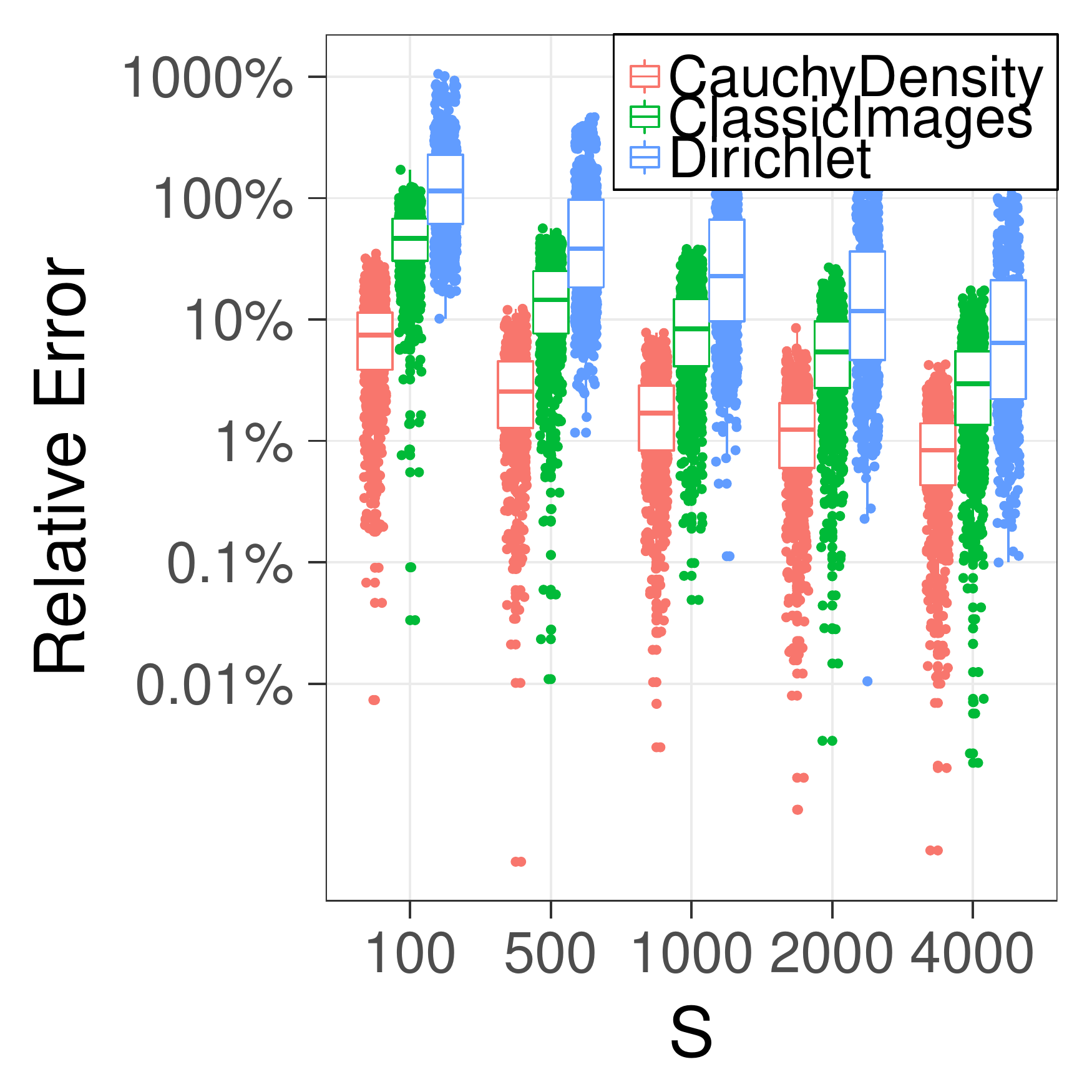}
  \caption{A comparison of the relative errors for different image classes (left) and and point cloud instance classes (right).}
  \label{fig:classes}
\end{figure*}

\begin{figure*}
  \centering
  \includegraphics[width=0.49\textwidth]{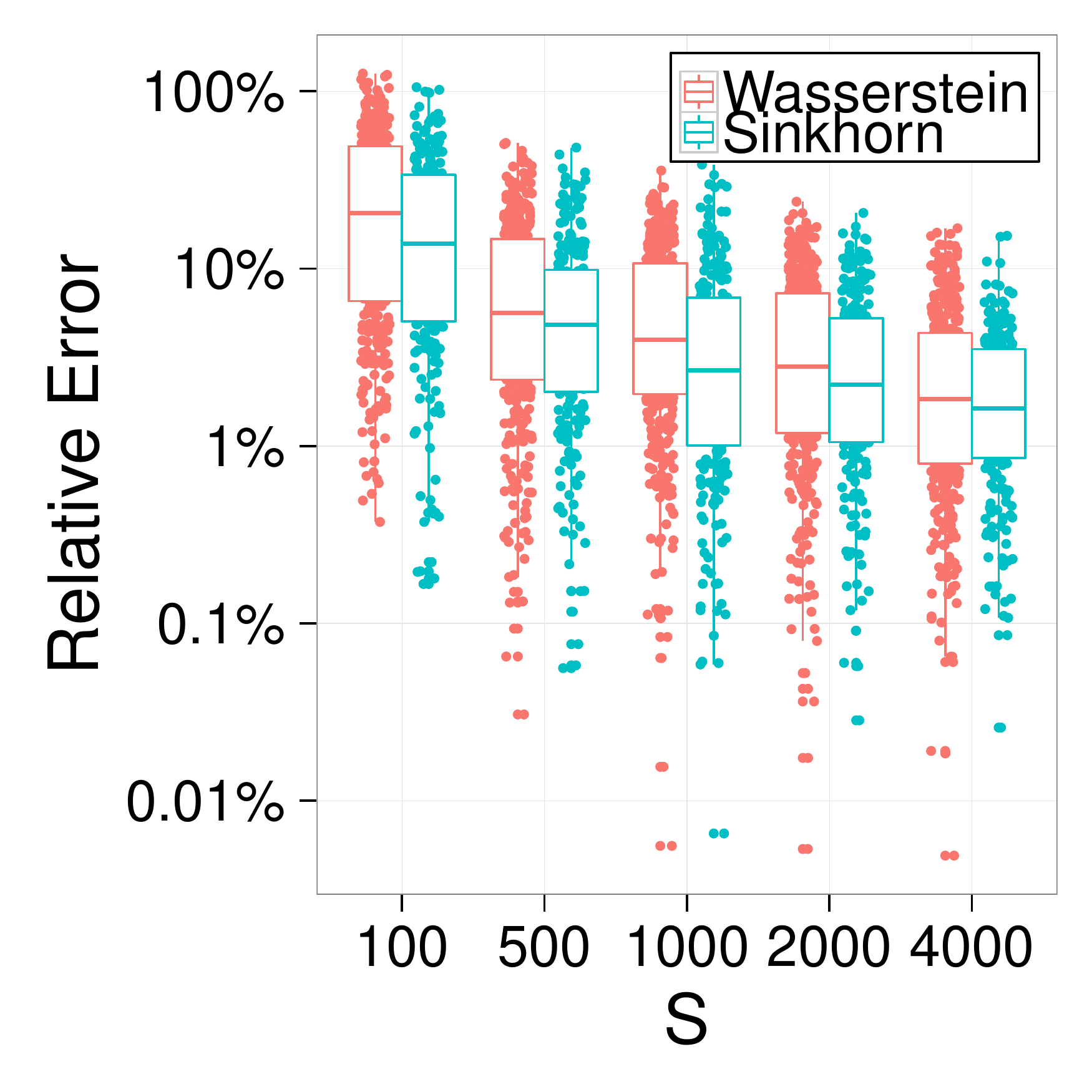}
  \caption{A comparison between the approximations of the Wasserstein and Sinkhorn distances.}
  \label{fig:solver}
\end{figure*}

There are no significant differences in performance between the different back-end solvers for the Wasserstein distance.
As Figure \ref{fig:solver} shows, accuracy seems to be better for
the Sinkhorn distance compared to the other three solvers which report the exact Wasserstein distance.

\begin{figure*}
  \centering
  \includegraphics[width=0.49\textwidth]{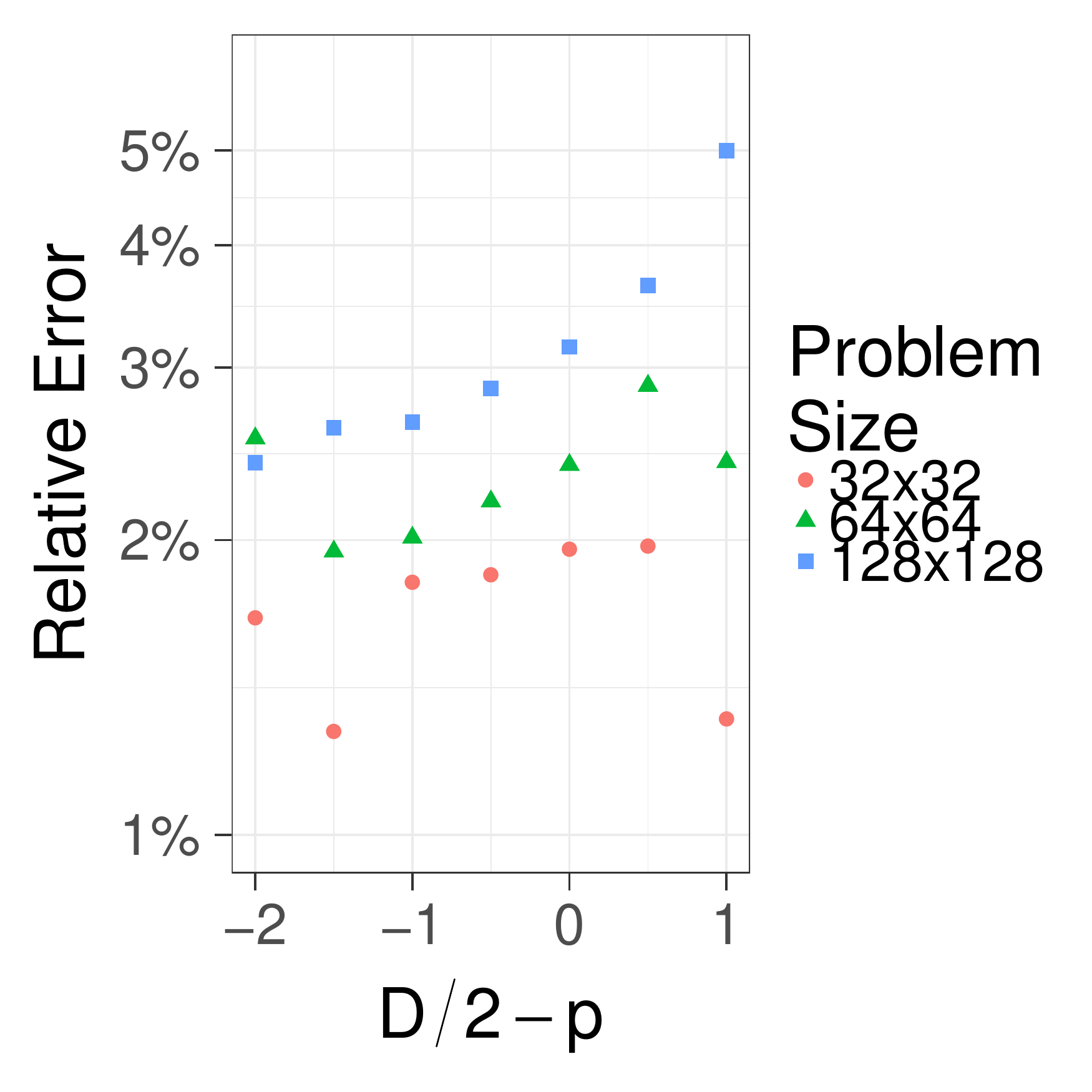}
  \caption{A comparison of the mean relative errors in the point cloud instances with sample size $S = 4000$ for different values of $p^\prime = (\nicefrac D 2) - p$.}
  \label{fig:pprime}
\end{figure*}
In the results of the point cloud instances we can observe the influence of the value $p^\prime = (\nicefrac D 2) - p$ on the scaling of the relative error with the instance size $N$ for constant sample size ($S = 4000$). This is shown in Figure \ref{fig:pprime}. We observe an increase of the relative error with $p^\prime$, as expected from the theory. However, we are not able to clearly distinguish between the three cases $p^\prime < 0$, $p^\prime = 0$ and $p^\prime > 0$. This might be due to the relatively small instance sizes $N$ in the experiments. While we see that the relative errors are independent of $N$ in the image case (compare Figure \ref{fig:overview}), for the point clouds $N$ has an influence on the accuracy that depends on $p^\prime$.

\section{Discussion}
As our simulations demonstrate, subsampling is a simple yet powerful tool to obtain good approximations to Wasserstein distances with only a small fraction of required runtime and memory.  It is especially remarkable that in the case of two dimensional images for a fixed amount of subsampled points, and therefore a fixed amount of time and memory, the relative error is independent of the resolution/size of the images. Based on these results, we expect the subsampling algorithm to return similarly precise results with even higher resolutions of the images it is applied to, while the effort to obtain them stays the same. Even in point cloud instances the relative error only scales mildly with the original input size $N$ and is dependent on the value $p^\prime$.

The numerical results (Figure \ref{fig:overview}) show an inverse polynomial decrease of the approximation error with $S$, in accordance with the theoretical results.  In fact, the rate $\mathcal O(S^{-1/2p})$ is optimal.  Indeed, when $\br=\bs$ (are nontrivial measures), \cite{sommerfeld_inference_2018} show that $Z_S=S^{1/2p}[W_p(\brh_S,\bsh_S)-W_p(\br,\bs)]$ has a nondegenerate limiting distribution $Z$.  For each $R>0$ the function $x\mapsto \min(R,|x|)$ is nonnegative, continuous and bounded, so
\[
\liminf_{S\to\infty} E\{S^{1/2p}|W_p(\brh_S,\bsh_S)-W_p(\br,\bs)|\}
=
\liminf_{S\to\infty} E\{|Z_S|\}\ge 
\liminf_{S\to\infty} E\min\{R,|Z_S|\}
=
E\min(R,|Z|).
\]
Letting $R\to\infty$ and using the monotone convergence theorem yields
\[
\liminf_{S\to\infty} E\{S^{1/2p}|W_p(\brh_S,\bsh_S)-W_p(\br,\bs)|\}
\ge E|Z|>0.
\]

When applying the algorithm, it is important to note that the quality of the returned values depends on the structure of the data. In very irregular instances it is necessary to increase the sample size in order to obtain similarly precise results, while in regular structures a small sample size suffices.

{Our scheme allows the parameters $S$ and $B$ to be easily tuned towards faster runtimes or more precise results, as desired. Increases and decreases of the sample size $S$ will increase/decrease the mean approximation of $W_p$ by $\hat W_p^{(S)}$, while $B$ will only affect the concentration around $E\hat W_p^{(S)}$.  Empirically, we found that for fixed computational cost, the best performance is achieved when $B=1$ (compare Figure~\ref{fig:overview}), suggesting that the bias is more dominant than the variance in the mean squared error.}

The scheme presented here can readily be applied to other optimal transport distances, as long as a solver is available, as we demonstrated with the Sinkhorn distance \citep{Cuturi2013a}. Empirically, we can report good performance in this case, suggesting that entropically regularized distances might be even more amenable to subsampling approximation than the Wasserstein distance itself. Extending the theoretical results to this case would require an analysis of the mean speed of convergence of empirical Sinkhorn distances, which is an interesting task for future research.

All in all, subsampling proves to be a general, powerful and versatile tool that can be used with virtually any optimal transport solver as back-end and has both theoretical approximation error guarantees, and a convincing performance in practice. It is a challenge to extend this method in a way which is specifically tailored to the geometry of the underlying space $\X$, which may result in further improvements.

\section*{Appendix}
\subsection{Proof of Theorem \ref{thm:mean_rate_null}}

\paragraph{Proof strategy}
The method used in this proof has been employed before to bound the mean rate of convergence of the empirical Wasserstein distance on a general metric space $(\X, d)$ \citep{Boissard2014,fournier_rate_2014}. In essence, it constructs a tree on the space $\X$ and bounds the Wasserstein distance with some transport metric in the tree, which can either be computed explicitly or bounded easily ({see also \cite{heinrich2018strong}, who use a coarse-graining tree in order to bound the Wasserstein distance in the context of mixture models}).  Our construction is specifically tailored to finite spaces, and allows to obtain a better dependence on $N=|\X|$ in Theorem~\ref{thm:grid} while preserving the rate $S^{-1/2}$.

More precisely, in our case of finite spaces, let $\T$ be a spanning tree on $\X$ (that is, a tree with vertex set $\X$ and edge lengths given by the metric $d$ on $\X$) and $d_\T$ the metric on $\X$ defined by the path lengths in the tree. Clearly, the tree metric $d_\T$  dominates the original metric $d$ on $\X$ and hence $W_p(\br, \bs) \leq W_p^\T(\br, \bs)$ for all $\br, \bs \in\mathcal{P}(\X)$, where $W_p^\T$ denotes the Wasserstein distance evaluated with respect to the tree metric. The goal is now to bound $E\left[(W_p^\T(\brh_S, \br))^p \right]$. {We refer to \cite{tameling2018computational} for examples and comparisons of different spanning trees on two-dimensional grids.} 

Assume $\T$ is rooted at $\root(\T) \in \X$. Then, for $x \in \X$ and $x \neq \root(\T)$
we may define $\parent(x) \in \X$ as the immediate neighbor of $x$ in the unique path
connecting $x$ and $\root(\T)$. We set $\parent(\root(\T)) = \root(\T)$.
We also define $\children(x)$ as the set of vertices $x^\prime \in \X$ such that
there exists a sequence $x^\prime = x_1, \dots, x_l = x \in \X$ with $\parent(x_j) = x_{j+1}$
for $j = 1, \dots, l -1$. Note that with this definition $x \in \children(x)$. Additionally,
define the linear operator $S_\T \colon \RR^\X \rightarrow \RR^\X$
\begin{equation}
	(S_\T \bu)_x = \sum_{x^\prime \in \children(x)} u_{x^\prime}.
	\label{eq:def_ST}
\end{equation}

\paragraph{Building the tree}
We build a $q$-ary tree on $\X$. To this end, we split $\X$ to $l_{\max}+2$ groups and build the tree in such a way that a node at level $l+1$ has a unique parent at level $l$ with edge length $q^{-l}$.  The formal construction follows.

For $l\in \{0, \dots,l_{\max}\}$ we let $Q_l\subset\X$ be the center points of a
$q^{-l}\diam(\X)$ covering of  $\X$, that is 
\[
  \bigcup _{x\in Q_l} B(x, q^{-l}\diam(\X)) = \X, \text{ and } |Q_l| =
  \mathcal{N}(\X, q^{-l}\diam(\X)),
\]
where $B(x, \ve) = \{x' \in \X : d(x,x') \leq \ve \}$.
Additionally set $Q_{l_{\max} + 1} = \X$.
Now define $\tilde{Q}_l = Q_l \times \{l\}$ and we will build a tree structure
on $\cup_{l = 0}^{l_{\max} + 1} \tilde{Q}_l$.

Since we must have $|\tilde{Q}_0| = 1$ we can take this element as the root.  
Assume now that the tree already
contains all elements of $\cup_{j=0}^l \tilde{Q}_j$.  Then,
we add to the tree all elements of $\tilde{Q}_{l+1}$  
by choosing for $(x, l+1)\in \tilde{Q}_{l+1}$ (exactly one) parent element $(x', l)\in \tilde{Q}_l$  such
that $d(x, x') \le q^{-l}\diam(\X)$. This is possible, since  $Q_l$ is a
$q^{-l}\diam(\X)$ covering of $\X$. We set the length of the connecting edge to
$q^{-l}\diam(\X)$.

In this fashion we obtain a spanning tree $\T$ of $\cup_{l=0}^{l_{\max} + 1}
\tilde{Q}_l$ and a partition
$\{\tilde{Q}_l\}_{l=0, \dots, l_{\max} + 1}$. 
About this tree we know that
\begin{itemize}
  \item it is in fact a tree. First, it is connected, because the construction
    starts with one connected component and in every subsequent step all
    additional vertices are connected to it. Second, it contains no cycles.
    To see this let $((x_1, l_1), \dots, (x_K,l_K))$ be a cycle in $\T$. Without loss of
    generality we may assume $l_1= \min\{l_1, \dots, l_K\}$.
    Then, $(x_1, l_1)$ must have at least two edges connecting it to vertices in a
    $\tilde{Q}_l$ with $l \geq l_1$ which is impossible by construction.
  \item $|\tilde{Q}_l|  = \mathcal{N}(\X, q^{-l}\diam(\X))$ for $0\leq l \leq
    l_{\max}$.
  \item $d(x, \parent(x)) = q^{-l + 1}\diam(\X)$ whenever $x\in \tilde{Q}_l$, $l\ge 1$. 
  \item  $d(x, x') \leq d_{\T} \left( (x,l_{\max} + 1), (x', l_{\max} + 1) \right)$.
\end{itemize}
Since the leaves of $\T$ can be identified with $\X$  a measure $\br \in
\mathcal{P}(\X)$ canonically defines  a probability measure $\br^\T\in
\mathcal{P}(\T)$ for which $r^\T_{(x, l_{\max} + 1)} = r_x$ and $r^\T_{(x,l)} =
0$ for $l\leq l_{\max}$.
In slight abuse of notation we will denote the measure $\br^\T$ simply by $\br$.
With this notation, we have 
$W_p(\br, \bs) \leq W_p^\T(\br, \bs)$
for all $\br, \bs \in\mathcal{P}(\X)$.
\paragraph{Wasserstein distance on trees}
Note also that $\T$ is \textit{ultra-metric} that is,
all its leaves are at the same distance from the root. For trees of this type, we
can define a height function $h:\X\ra[0,\infty)$ such that $h(x)=0$ if
$x\in\X$ is a leaf and $h(\parent(x)) - h(x) = d_\T(x,\parent(x))$ for
all $x\in\X\setminus \root(\X)$. There is an explicit formula
for the Wasserstein distance on ultra-metric trees 
\citep{kloeckner_geometric_2013}. Indeed, if $\br,\bs\in\mathcal{P}(\X)$ then
  \begin{equation}
       (W_p^{\T} (\br,\bs))^p 
       = 2^{p-1} \sum_{x\in\X} \left( h(\parent(x))^p -
      h(x)^p \right) \left| (S_\T\br)_x - (S_\T\bs)_x \right|,
    \label{eq:formula_ultra_metric}
  \end{equation}
  with the operator $S_\T$ as defined in \eqref{eq:def_ST}.
  For the tree $\T$ constructed above and $x\in \tilde{Q}_l$ with $l=0,\dots,
  l_{\max}$ we have 
  \[
    h(x) = \sum_{j=l}^{l_{\max}} q^{-j} \diam(\X),
  \]
  and therefore $\diam(\X)q^{-l} \leq h(x) \leq 2\diam(\X)q^{-l}$.  This yields
  \[
    (h(\parent(x))^p - (h(x))^p) \leq (\diam(\X))^p q^{-(l-2)p}. 
  \]
  Then  \eqref{eq:formula_ultra_metric}  yields
  \begin{align*}
     E\left[ W_p^p (\brh_S, \br) \right] 
     \leq 2^{p-1}q^{2p} (\diam(\X))^p
    \sum_{l=0}^{l_{\max} + 1} q^{-lp} \sum_{x\in \tilde{Q}_l}
    E|(S_\T\brh_S)_{x} - (S_\T\br)_{x}|.
  \end{align*}
  Since $(S_\T\brh_S)_{x}$ is the mean of $S$ i.i.d. Bernoulli variables
  with expectation $(S_\T\br)_{x}$ we have
  \begin{multline*}
    \sum_{x\in \tilde{Q}_l} E|(S_\T\brh_S)_{x} - (S_\T\br)_{x}| \leq
    \sum_{x\in \tilde{Q}_l} \sqrt{\frac{(S_\T\br)_{x}(1-(S_\T\br)_{x})}{S}} \\
     \leq \frac{1}{\sqrt{S}} \left( \sum_{x\in \tilde{Q}_l} (S_\T\br)_{x} \right)^{1/2}
    \left( \sum_{x\in \tilde{Q}_l}  (1-(S_\T\br)_{x})\right)^{1/2}    
    \leq \sqrt{|\tilde{Q}_l| / S},
  \end{multline*}
  using H\"older's inequality and the fact that $\sum_{x\in \tilde{Q}_l} (S_\T
  \br)_x = 1$ for all $l = 0, \dots, l_{\max} + 1$.  This finally yields
  \begin{multline*}
    E\left[ W_p^p (\brh_S, \br) \right] 
    \leq 
    2^{p-1}q^{2p} (\diam(\X))^p
    \left( q^{-(l_{\max} + 1)p} \sqrt{N} + \sum_{l=0}^{l_{\max}} q^{-lp}
    \sqrt{\mathcal{N}(\X, q^{-l}\diam(\X))} \right) / \sqrt{S} \\
    \leq \E_q(\X, p) / \sqrt{S}.
  \end{multline*}

\paragraph{Covering by arbitrary sets}
We now explain how to obtain the second formula for $\E_q$ as stated in Remark~\ref{rem:N1}. The idea is to define the coverings with arbitrary sets, not necessarily balls.  Let
\[
\mathcal N_1(\X,\delta)
=\inf\{m:\exists A_1,\dots,A_m\subseteq\X,\diam(A_i)\le 2\delta,\cup A_i\supseteq\X\}.
\]
Since balls satisfy the diameter condition, $\mathcal N_1\le \mathcal N$.  Furthermore, if $\X'\supseteq \X$, then $\mathcal N_1(\X,\delta)\le \mathcal N_1(\X',\delta)$, which is not the case for $\mathcal N$.  For example, let $\X=\{-1,1\}\subset \{-1,0,1\}=\X'$ and observe that
\[
\mathcal N_1(\X,1)=1=\mathcal N_1(\X',1),
\qquad \textrm{but}\qquad
\mathcal N(\X,1)=2>1=\mathcal N(\X',1).
\]

The tree construction with respect to the new covering numbers is done in a similar manner.  For each $0\le l\le l_{\max}$ let $Q_l'$ be a collection of disjoint sets of diameter $2q^{-l}\diam(\X)$ that cover $\X$ and $|Q_l'|=\mathcal N_1(\X,q^{-l}\diam(\X))$.  Let $Q_l=\{x_1,\dots,x_{|Q_l'|}\}\subseteq\X$ be an arbitrary collection of representatives from the sets in $Q_l'$.  Such representatives exist by minimality of $|Q_l'|$ and they are different by the disjoint nature of $Q_l'$.  Additionally set $Q_{l_{\max}+1}=\X$.  Construct the tree in the same way, except that now we only have the bound $d(x,x')\le 2q^{-l}\diam(\X)$ for $(x,l+1)\in \tilde Q_{l+1}$ and a corresponding $(x,l)\in \tilde Q_{l}$, so we need to set the edge length to be $2q^{-l}\diam(\X)$, twice as much as in the original construction.  The proof then goes in the same way, with an extra factor $2^p$.  We obtain an alternative bound
\[
    \E_q  = 
    2^{2p-1}q^{2p} (\diam(\X))^p
    \left( q^{-(l_{\max} + 1)p} \sqrt{N} + \sum_{l=0}^{l_{\max}} q^{-lp}
    \sqrt{\mathcal{N}_1(\X, q^{-l}\diam(\X))} \right).
\]
In comparison with \eqref{eq:Eq}, we replaced $\mathcal N$ by $\mathcal N_1$.  The price to pay for this is an additional factor of $2^p$.

  \subsection{Proof of Theorem \ref{thm:grid}}
  We may assume without loss of generality that $\X \subseteq [0,\diam(\X)]^D$.  The covering numbers of the cube with Euclidean balls behave badly in high dimensions, so it will prove useful to replace the Euclidean norm by the infinity norm $\|x\|_\infty=\max_i|x_i|$, $x=(x_1,\dots,x_D)\in\RR^D$.  With this norm we have $\mathcal N([0,\diam(\X)]^D,\epsilon\diam(\X),\|\cdot\|_\infty)\le (\lceil 1/(2\epsilon)\rceil)^D$.  If $q$ is an integer, then
  \[
    \mathcal{N}(\X, q^{-l}\diam(\X),\|\cdot\|_\infty)
    \le \mathcal{N}([0,\diam(\X)]^D,q^{-l}\diam(\X)/2,\|\cdot\|_\infty)
    \le \lceil q^l\rceil ^D
    =q^{lD}.
  \]
  This yields
\[
    \sum_{l=0}^{l_{\max}} q^{-lp}
    \sqrt{\mathcal{N}(\X, q^{-l}\diam(\X))} 
    \le
\sum_{l=0}^{l_{\max}}
    q^{l(D/2 - p)}
    =
    \begin{cases}
      (1 - q^{(l_{\max}+1)(D/2 - p)}) / (1 - q^{D/2-p}) & \text{if } D \neq 2p, \\
      l_{\max}+1 & \text{if } D = 2p.
    \end{cases}
\]
  Denote for brevity $p'=D/2-p$ and plug this into \eqref{eq:Eq}:
\[
S^{1/2}E\left[ W_p^p (\brh_S, \br,\|\cdot\|_\infty) \right]
\le 
    2^{p-1}q^{2p}(\diam(\X))^p
    \left[
    q^{-p(l_{\max}+1)}\sqrt N
    +
    \begin{cases}
      (1 - q^{(l_{\max}+1)p'}) / (1 - q^{p'}) & \text{if } p' \ne 0, \\
      l_{\max}+1 & \text{if } p'= 0.
    \end{cases}
    \right]
\]
  If $p'<0$, then let $l_{\max}\to\infty$.  Otherwise, choose $l_{\max}=\lfloor D^{-1}\log_qN\rfloor$ (giving the best dependence on $N$), so that the element inside the square brackets is smaller than
\begin{equation}\label{eq:CDpqN}
\begin{cases}
      1 / (1 - q^{p'}) & \text{if } p' < 0, \\
      2 + D^{-1}\log_qN & \text{if } p'= 0, \\
      N^{1/2 - p/D} + (N^{1/2 - p/D}q^{p'} - 1) / (q^{p'} - 1) & \text{if } p' > 0
    \end{cases}
    \quad \le \quad
    \begin{cases}
      1 / (1 - q^{p'}) & \text{if } p' < 0, \\
      2 + D^{-1}\log_qN & \text{if } p'= 0, \\
      (2q^{p'}-1)N^{1/2 - p/D}/(q^{p'} - 1) & \text{if } p' > 0.
    \end{cases}
\end{equation}
The right-hand side is $C_{D,p}(N)$ for $q=2$.  To get back to the Euclidean norm use $\|a\|_2\le \|a\|_\infty \sqrt D$, so that
\[
     E\left[ W_p^p (\brh_S, \br) \right] 
     \le D^{p/2} E\left[ W_p^p (\brh_S, \br,\|\cdot\|_\infty) \right] 
     \le D^{p/2} 2^{p-1}q^{2p}(\diam(\X))^p C_{D,p}(N)/\sqrt S,
\]
which is the desired conclusion.

\begin{lem}\label{lem:q2}
\begin{enumerate}

\item[(a)] Let $\tilde C_{D,p}(q,N)$ denote the right-hand side of \eqref{eq:CDpqN}.  Then the minimum of the function $q\mapsto q^{2p}\tilde C_{D,p}(q,N)$ on $[2,\infty)$ is attained at $q=2$.

\item[(b)] Let $q\ge2$, $p,D$ integers, and $p'=D/2-p$.  If $p'<0$, then $1 / (1 - q^{p'} )\le 2+\sqrt2$ and if $p'>0$, then $2+1/(q^{p'} - 1)\le 3+\sqrt2$.
\end{enumerate}
\end{lem}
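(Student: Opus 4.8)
The plan is to treat the two parts separately, beginning with the elementary part (b). The key observation there is that, since $p$ and $D$ are integers, $p'=D/2-p$ is an integer multiple of $1/2$, so $p'<0$ forces $p'\le-1/2$ and $p'>0$ forces $p'\ge 1/2$. In the case $p'<0$ I would use $q^{p'}\le q^{-1/2}\le 2^{-1/2}$, whence $1-q^{p'}\ge 1-1/\sqrt2$ and, after rationalizing, $1/(1-q^{p'})\le 1/(1-1/\sqrt2)=2+\sqrt2$. Symmetrically, when $p'>0$ I would use $q^{p'}\ge q^{1/2}\ge\sqrt2$, so $q^{p'}-1\ge\sqrt2-1$ and $1/(q^{p'}-1)\le 1/(\sqrt2-1)=1+\sqrt2$, giving $2+1/(q^{p'}-1)\le 3+\sqrt2$.

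For part (a) I would prove the stronger assertion that $f(q):=q^{2p}\tilde C_{D,p}(q,N)$ is increasing on $[2,\infty)$, which immediately places the minimum at $q=2$. The natural tool is the logarithmic derivative. In each of the three regimes defining $\tilde C_{D,p}$ (noting that the factor $N^{1/2-p/D}$ in the case $p'>0$ is a positive constant in $q$ and drops out of $f'/f$), a direct computation yields $f'(q)/f(q)=q^{-1}\bigl(2p-g(q)\bigr)$ for a nonnegative $g$: one finds $g(q)=|p'|/(q^{|p'|}-1)$ when $p'<0$, $g(q)=c/[\ln q\,(2\ln q+c)]$ with $c=\ln N/D\ge0$ when $p'=0$, and $g(q)=p'q^{p'}/[(2q^{p'}-1)(q^{p'}-1)]$ when $p'>0$. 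It then suffices to verify $g(q)<2\le 2p$ for all $q\ge2$, since $p\ge1$.

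The unifying estimate is the convexity bound $2^a-1\ge a\ln2$ (from $e^x\ge1+x$), which gives $a/(2^a-1)\le 1/\ln2<2$ for every $a>0$. For $p'<0$ I would apply this with $a=|p'|$ after $q^{|p'|}\ge 2^{|p'|}$; for $p'=0$ I would simply note $c/(2\ln q+c)\le1$, so $g(q)\le 1/\ln q\le 1/\ln2<2$; and for $p'>0$ I would first reduce $g$ using $2q^{p'}-1\ge q^{p'}$ (valid as $q^{p'}\ge1$) to get $g(q)\le p'/(q^{p'}-1)\le p'/(2^{p'}-1)$, then apply the bound with $a=p'$. I expect the main obstacle to be the $p'>0$ regime, where the ratio $(2q^{p'}-1)/(q^{p'}-1)$ renders the derivative least transparent; the simple preliminary inequality $2q^{p'}-1\ge q^{p'}$ is exactly what avoids a finer monotonicity analysis of $t\mapsto t/[(2t-1)(t-1)]$ and lets the single exponential estimate close all three cases.
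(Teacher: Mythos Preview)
Your proof is correct in both parts. Part (b) is essentially identical to the paper's argument: both exploit that $p'\in\frac12\ZZ$ forces $|p'|\ge 1/2$, and evaluate at the extremal values $q=2$, $|p'|=1/2$.

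For part (a), you follow the same overall strategy as the paper --- show that $f(q)=q^{2p}\tilde C_{D,p}(q,N)$ has positive derivative on $[2,\infty)$ --- but your execution is genuinely cleaner. By taking the logarithmic derivative you factor out the positive quantity $\tilde C_{D,p}(q,N)$ and reduce all three regimes to the single inequality $g(q)<2\le 2p$, which you then close uniformly via $a/(2^a-1)\le 1/\ln 2$. In particular, in the case $p'>0$ your factorisation leaves the compact expression $g(q)=p'q^{p'}/[(2q^{p'}-1)(q^{p'}-1)]$, and the one-line reduction $2q^{p'}-1\ge q^{p'}$ immediately gives $g(q)\le p'/(q^{p'}-1)\le p'/(2^{p'}-1)\le 1/\ln 2$. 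The paper, by contrast, differentiates $f$ directly; for $p'>0$ this yields the less tractable condition $4p+2p/(q^{p'}-1)>p'q^{p'}/(q^{p'}-1)^2$, which it handles by splitting into four sub-cases ($p'\le 1/2$; $p'\ge 1$; $p'\ge 1/2$ and $q\ge e$; $p'\in[1/2,1]$ and $q\le e$), each with its own ad hoc numerical estimate. Your route avoids this case analysis entirely and makes the role of the exponential bound $e^x\ge 1+x$ transparent across all three regimes.
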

\begin{proof}
We begin with (b).  If $p'<0$ then $1/(1-q^{p'})$ is decreasing in $q$ and increasing in $p'$.  The integer constaints on $D$ and $p$ imply that the maximal value $p'$ can attain is $-0.5$.  The smaller value $q$ can attain is 2.  Thus
\[
1/(1-q^{p'})
\le 1/(1-2^{-0.5})
=\frac{\sqrt 2}{\sqrt 2-1}
=\sqrt2(\sqrt2+1)
=2+\sqrt 2.
\]
When $p'>0$ the term $2+1/(q^{p'} - 1)$ is decreasing in $p'\ge0.5$ and in $q\ge2$, so it is bounded by
\[
2+1/(\sqrt2-1)
=3+\sqrt2.
\]

To prove (a) we shall differentiate the function $q^{2p}\tilde C_{D,p}(q,N)$ with respect to $q$ and show that the derivative is positive for all $q\ge2$, and $p,D,N\ge1$.

For negative $p'$ consider the function
\[
f_1(q)= \frac{q^{2p}} {1-q^{p'}},
\qquad q\ge 2;p\ge 1;p'<0.
\]
Its derivative is
\[
f_1'(q)
=\frac{2pq^{2p-1}(1-q^{p'})+p'q^{p'-1}q^{2p}}{(1-q^{p'})^2}
=\frac{q^{2p-1}}{1-q^{p'}}
\left[2p + \frac{p'q^{p'}}{1-q^{p'}}\right]
.
\]
It suffices to show that the term in square brackets is positive, since $1-q^{p'}>0$. Let us bound $q^{p'}$ and the denominator $(1-q^{p'})^{-1}$.  Since $e^x\ge 1+x$ for $x\ge0$, $e^{-x}\le 1/(1+x)$ and setting $x=-p'\log q$ gives
\[
q^{p'}
=e^{p'\log q}
\le \frac1{1-p'\log q}.
\]
Hence
\[
1-q^{p'}
\ge 1-\frac1{1-p'\log q}
= \frac{1-p'\log q-1}{1-p'\log q}
= \frac{-p'\log q}{1-p'\log q}.
\]
so that
\[
\frac{q^{p'}}{1-q^{p'}}
\le \frac1{1-p'\log q}\frac{1-p'\log q}{-p'\log q}
= \frac{1}{-p'\log q}.
\]
Conclude that, since $p'<0$,
\[
2p + \frac{p'q^{p'}}{1-q^{p'}}
\ge 2p + p'\frac{1}{-p'\log q}
=2p + \frac1{-\log q}
=2p - \frac1{\log q}
\ge 2p - \frac1{\log 2}
\ge 2 - \frac1{\log 2}
>0.
\]
For $p'=0$ consider the function
\[
f_2(q)
=q^{2p}(2+D^{-1}\log_qN)
=2q^{2p}+\frac{q^{2p}\log N}{D\log q},
\qquad q\ge2; D=2p\ge2.
\]
Its derivative is
\[
f_2'(q)=4pq^{2p-1}+\frac{\log N}{D(\log q)^2}\left[2pq^{2p-1}\log q-q^{-1}q^{2p}\right]
=q^{2p-1}\left[4p+\frac{\log N}{D(\log q)^2}(2p\log q-1)\right]
>0
\]
since $2p\log q\ge 2\log 2>1$.

For $p'>0$ consider the function
\[
f_3(q)
=q^{2p}[2+1/(q^{p'}-1)]
=2q^{2p}+\frac{q^{2p}}{q^{p'}-1}
=2q^{2p} - f_1(q),
\qquad q\ge2; p\ge1; p'>0.
\]
The derivative is
\[
4pq^{2p-1}-\frac{q^{2p-1}}{1-q^{p'}}
\left[2p + \frac{p'q^{p'}}{1-q^{p'}}\right]
=4pq^{2p-1}+\frac{q^{2p-1}}{q^{p'}-1}
\left[2p - \frac{p'q^{p'}}{q^{p'}-1}\right]
.
\]
This function is more complicated and we need to split into cases according to small, large or moderate values of $p'$.

\textbf{Case 1:  $p'\le 0.5$.}  Then the negative term can be bounded using $q^{p'}-1\ge p'\log q$ as
\[
\frac{p'q^{p'}}{q^{p'}-1}
=p'+\frac{p'}{q^{p'}-1}
\le p'+\frac1{\log q}
\le p'+\frac1{\log 2}
\le 0.5 + \frac1{\log 2}
<2\le 2p.
\]
Thus $f_3'(q)\ge0$ in this case.

To deal with larger values of $p'$ rewrite the derivative as
\[
q^{2p-1}\left[4p+\frac{2p}{q^{p'}-1} - \frac{p'q^{p'}}{(q^{p'}-1)^2}\right],
\]
and bound the negative part:
\[
\frac{p'q^{p'}}{(q^{p'}-1)^2}
=\frac{p'}{q^{p'}-1} + \frac{p'}{(q^{p'}-1)^2}
\le \frac1{\log q} + \frac1{(q^{p'}-1)\log q}.
\]
\textbf{Case 2: $p'\ge1$.}  Then $q^{p'}-1\ge1$ so this is smaller than
\[
\frac 1{\log 2} +\frac1{\log 2}
=\frac 2{\log 2}
<4\le 4p.
\]
Hence the derivative is positive in this case.

\textbf{Case 3: $p'\ge1/2$ and $q\ge e$.}  Then this is smaller than
\[
1+\frac1{e^{1/2} - 1}
\le 1+\frac1{\sqrt 2 -1}
=2+\sqrt 2 < 4\le 4p.
\]
Hence the derivative is positive in this case.

\textbf{Case 4: $q\le e$ and $p'\in[1/2,1]$.}  The negative term is bounded by
\[
\frac1{\log q} + \frac1{(q^{p'}-1)\log q}
\le \frac1{\log 2} + \frac1{(q^{p'}-1)\log 2}
\le \frac1{\log 2} + \frac1{(\sqrt2-1)\log 2}
=\frac{2+\sqrt 2}{\log 2}
\approx 4.93,
\]
whereas the positive term can be bounded below as
\[
4p+\frac{2p}{q^{p'}-1}
\ge 4 + \frac2{e-1}
\approx 5.16
>4.93.
\]
This completes the proof.

\end{proof}

  \subsection{Proof of Theorem \ref{thm:simple_concentration}}
  We introduce some additional notation. For
  $(x,y), (x',y')\in\X^2$ we set
  \[
    d_{\X^2}((x,y), (x',y')) = \left\{d^p(x,x') + d^p(y,y')\right\}^{1/p}
  \]
  We further define the function $Z:(\X^2)^{SB} \ra \RR$ via
  \[
    \begin{split}
    \left( (x_{11}, y_{11}), \dots , (x_{SB}, y_{SB}) \right)
     \mapsto
    \frac{1}{B}\sum_{i=1}^B \left[ W_p\left(\frac{1}{S}\sum_{j=1}^S
      \delta_{x_{ji}}, \frac{1}{S}\sum_{j=1}^S
      \delta_{y_{ji}}\right) - W_p(\br, \bs)
    \right].
  \end{split}
  \]
  Since $W_p^p(\cdot, \cdot)$ is jointly convex 
  {\citep[Theorem~4.8]{villani_optimal_2008}},
  \begin{equation*}
    W_p\left(\frac{1}{S}\sum_{j=1}^S \delta_{x_j}, \frac{1}{S}\sum_{j=1}^S
    \delta_{y_j}\right) \leq \left\{ \frac{1}{S} \sum_{j=1}^S W_p^p(\delta_{x_j},
    \delta_{y_j}) \right\}^{1/p}
    = 
    S^{-1/p} \left\{\sum_{j=1}^S d^p(x_j, y_j)\right\}^{1/p}.
  \end{equation*}

  Our first goal is to show that $Z$ is Lipschitz continuous. To this end, let
  $((x_{11}, y_{11}) ,\dots, (x_{SB}, y_{SB}))$ and $((x'_{11}, y'_{11}) ,\dots,
  (x'_{SB}, y'_{SB}))$ arbitrary elements of
  $(\X^2)^{SB}$. Then, using the reverse triangle
  inequality and the relations above
  \begin{align*}
    &|Z((x_{11}, y_{11}) ,\dots, (x_{SB}, y_{SB})) 
     - Z((x'_{11}, y'_{11}) ,\dots, (x'_{SB}, y'_{SB}))| 
    \\ &\leq \frac{1}{B}\sum_{i=1}^B \bigg|W_p\left(\frac{1}{S}\sum_{j=1}^S
    \delta_{x_{ji}}, \frac{1}{S}\sum_{j=1}^S
    \delta_{y_{ji}}\right)  
    -  W_p\left(\frac{1}{S}\sum_{j=1}^S \delta_{x'_{ji}}, \frac{1}{S}\sum_{j=1}^S
    \delta_{y'_{ji}}\right)\bigg| 
    \\ & \leq \frac{1}{B}\sum_{i=1}^B \bigg[W_p\left(\frac{1}{S}\sum_{j=1}^S
      \delta_{x_{ji}}, \frac{1}{S}\sum_{j=1}^S
      \delta_{x'_{ji}}\right)  
       +  W_p\left(\frac{1}{S}\sum_{j=1}^S \delta_{y_{ji}}, \frac{1}{S}\sum_{j=1}^S
    \delta_{y'_{ji}}\right)\bigg] \\
    &\leq \frac{S^{-1/p}}{B} \sum_{i=1}^B \bigg[ \left\{ \sum_{j=1}^S d^p(x_{ji},
      x'_{ji})\right\}^{1/p} 
      +  \left\{ \sum_{j=1}^S d^p(y_{ji},
    y'_{ji})\right\}^{1/p}\bigg]
    \\ &\leq \frac{S^{-1/p}}{B} \left( 2B \right)^{\frac{p-1}{p}} \left\{ 
      \sum_{i,j}d^p_{\X^2}((x_{ji}, y_{ji}), (x'_{ji}, y'_{ji}))
    \right\}^{1/p}  
  \end{align*}
  Hence, $Z/2$ is Lipschitz continuous with constant $(SB)^{-1/p}$ relative to
  the $p$-metric generated by $d_{\X^2}$ on $(\X^2)^{SB}$.

  For $\tilde{\br}\in\mathcal{P}({\X^2})$ let $H(\cdot\:|\:\tilde{\br})$ denote
  the relative entropy with respect to $\tilde{\br}$. Since $\X^2$ has $d_{\X^2}$-diameter $2^{1/p}\diam(\X)$, we have by
  {\citet[Particular case $2.5$, page $337$]{bolley_weighted_2005}} that for every $\tilde{\bs}$
  \begin{equation}
    W_p(\tilde{\br}, \tilde{\bs}) \leq \left( 8\diam(\X)^{2p} H(\tilde{\br}\:|\:
    \tilde{\bs} )\right)^{1/2p}.
    \label{eq:entropy}
  \end{equation}

  If $X_{11},\dots,X_{SB}\sim\br$ and $Y_{11},\dots,Y_{SB}\sim\bs$ are all
  independent, we have 
  \[
    Z((X_{11}, Y_{11}), \dots , (X_{SB}, Y_{SB})) \sim \WhnmB_p(\br, \bs) -
    W_p(\br, \bs).
  \]
  The Lipschitz continuity of $Z$ and the transportation inequality
  \eqref{eq:entropy} yields a concentration result for this random variable. In
  fact, by {\citet[Lemma 6]{gozlan_large_2007}} we have
  \[
      P\bigg[ \WhnmB_p(\br, \bs) -
        W_p(\br, \bs) 
         \geq E\left[ \WhnmB_p(\br, \bs) -
        W_p(\br, \bs) \right] + z \bigg] 
        \leq \exp\left(
        \frac{-SBz^{2p}}{8\:\diam(\X)^{2p}} \right).
    \]
    for all $z\geq 0$. Note that $-Z$ is Lipschitz continuous as well and hence,
    by the union bound,
    \[
        P\bigg[ \left|\WhnmB_p(\br, \bs) -
          W_p(\br, \bs)\right| 
          \geq E\left[ \left|\WhnmB_p(\br, \bs) -
          W_p(\br, \bs) \right|\bigg] + z \right] 
          \leq 2\exp\left(
          \frac{-SBz^{2p}}{8\:\diam(\X)^{2p}} \right).
      \]
      Now, with  the
  reverse triangle inequality, Jensen's inequality and  Theorem~\ref{thm:mean_rate_null},
  \begin{multline*}
    E\left[ \left| \WhnmB_p(\br, \bs) - W_p(\br, \bs)\right| \right] \leq 
    E\left[ W_p(\brh_S, \br) + W_p(\bsh_S, \bs) \right]
    \\ \leq E\left[ W_p^p(\brh_S, \br) \right]^{1/p} + \left[W_p^p(\bsh_S, \bs)
    \right]^{1/p}
    \leq 2\E_q^{1/p} / S^{1/(2p)}.
  \end{multline*}
  Together with the last concentration inequality above, this concludes the
  proof of Theorem~\ref{thm:simple_concentration}.

\subsection{Proof of Theorem~\ref{thm:mean_squared_alt}}
Denote $V=| \WhnmB_p(\br, \bs) - W_p(\br, \bs)|$, $C=2\E_q^{1/p}/S^{1/(2p)}\ge0$, and observe that
\begin{align*}
    E\left[ V ^2\right] 
&    =\int_0^\infty P(V>\sqrt t)dt
    =2\int_0^\infty P(V>s)sds
    =2\int_{-C}^\infty P(V>z+C)(z+C)dz\\
    & \le 2\int_{-C}^C(z+C)dz + 4\int_C^\infty P(V>z+C)zdz
    \le 4C^2 + 8\int_C^\infty z\exp\left(
        -\frac{SBz^{2p}}{8\:\diam(\X)^{2p}} \right)dz
\end{align*}
by Theorem~\ref{thm:simple_concentration}.  Changing variables and using the inequality $y^{2p}\ge y^2$ (valid for $y,p\ge1)$ gives
\begin{align*}
8\int_C^\infty z\exp\left(
        -\frac{SBz^{2p}}{8\:\diam(\X)^{2p}} \right)dz
&= 8C^2\int_1^\infty y\exp\left(
        -\frac{SB(Cy)^{2p}}{8\:\diam(\X)^{2p}} \right)dy\\
\le 8C^2\int_1^\infty y\exp\left(
        -\frac{SBC^{2p}y^2}{8\:\diam(\X)^{2p}} \right)dy
        &=8C^2\frac{4(\diam(\X))^{2p}}{SBC^{2p}}
        \exp\left(
        -\frac{SBC^{2p}}{8\:\diam(\X)^{2p}}\right)\\
        &=4C^2\frac{(\diam(\X))^{2p}}{2^{2p-3}\E_q^2B}
        \exp\left(
        -\frac{4^p\E_q^2B}{8\:\diam(\X)^{2p}}\right),
\end{align*}
where we have used $C^2=4\E_q^{2/p}S^{-1/p}$.  Deduce that
\[
    E\left[ \left| \WhnmB_p(\br, \bs) - W_p(\br, \bs) \right| ^2\right] 
    \le 16\E_q^{2/p}\left\{1 + \frac{(\diam(\X))^{2p}}{2^{2p-3}\E_q^2B}
        \exp\left(
        -\frac{4^p\E_q^2B}{8\:\diam(\X)^{2p}}\right)\right\}S^{-1/p}
        \le 18\E_q^{2/p} S^{-1/p}.
\]
For the last inequality, note that \eqref{eq:Eq} implies $\E_q^2\ge 2^{6p-2}[\diam(\X)]^{2p}$ and hence $[\diam(\X)]^{2p}/[B2^{2p-3}\E_q^2]\le 2^{5-8p}\le 1/8$, so the term in parentheses is smaller than $1+1/8$.

Similar computations show that $E\left[ \left| \WhnmB_p(\br, \bs) - W_p(\br, \bs) \right| ^\alpha\right]=\mathcal O(S^{-\alpha/(2p)})$  for all $0\le \alpha\le 2p$.

\bibliographystyle{plainnat}
\bibliography{finitewasser}

\end{document}